\newtheorem{ass}{Assumption}
\newtheorem{thm}{Theorem}
\newtheorem{crl}{Corollary}
\title{\LARGE \bf
Reduced Order Model Predictive Control For Setpoint Tracking
}
\author{Joseph Lorenzetti, Benoit Landry, Sumeet Singh, Marco Pavone	
\thanks{The authors are with the Department of Aeronautics and Astronautics, Stanford University, Stanford CA. Emails: \{jlorenze, blandry, ssingh19, pavone\}@stanford.edu.}
\thanks{This work was supported by the Office of Naval Research  (Grant N00014-17-1-2749). Joseph Lorenzetti is supported by the Department of Defense (DoD) through the National Defense Science and Engineering Fellowship (NDSEG) Program.}
}
\begin{document}

\maketitle
\thispagestyle{empty}
\pagestyle{empty}

\begin{abstract}
Despite the success of model predictive control (MPC), its application to high-dimensional systems, such as flexible structures and coupled fluid/rigid-body systems, remains a largely open challenge due to excessive computational complexity. A promising solution approach is to leverage reduced order models for designing the model predictive controller. In this paper we present a reduced order MPC scheme that enables setpoint tracking while robustly guaranteeing constraint satisfaction for linear, discrete, time-invariant systems. Setpoint tracking is enabled by designing the MPC cost function to account for the steady-state error between the full and reduced order models. Robust constraint satisfaction is accomplished by solving (offline) a set of linear programs to provide bounds on the errors due to bounded disturbances, state estimation, and model approximation. The approach is validated on a synthetic system as well as a high-dimensional linear model of a flexible rod, obtained using finite element methods.
\end{abstract}

\section{Introduction}
Model predictive control (MPC) is an advanced control technique that entails optimizing predicted future behavior in a receding horizon fashion. This is accomplished by solving, at each time step, an optimal control problem that includes a model of the system. The approach is especially suited for control of \textit{constrained} systems due to the ability to incorporate state and control constraints within the optimal control problems.

MPC grew in popularity for process control in industrial settings, where safety is critical and system dynamics are typically slow. As more powerful computational resources have become available, the appeal of MPC has spread to control systems with faster dynamics such as robotics and autonomous vehicles. For example, MPC has been applied to the control of aerospace systems \cite{ErenPrachEtAl2017}, robotic manipulators \cite{PoignetGautier2000}, ground vehicles \cite{BealGerdes2013}, and many others. However, there are still many interesting systems for which the use of MPC is computationally prohibitive. These systems typically have very high-dimensional state-space representations, such as from discrete approximations of infinite-dimensional systems. Infinite-dimensional systems, commonly characterized by partial differential equations, arise in many real-world robotic applications. Examples include soft robotics \cite{GouryDuriez2018}, flexible structures and robotic manipulators \cite{RaoPanEtAl1990}, and autonomous systems with coupled fluid/rigid-body dynamics due to aerodynamics \cite{AmsallemDeolalikarEtAl2013} or fluid sloshing \cite{ArdakaniBridges2011}.

One approach to reduce the computational burden of using MPC with high-dimensional systems is to solve the optimal control problem using reduced order models, which are lower order approximations of the original system dynamics \cite{Antoulas2005}. While this approach can provide computational tractability, it introduces model approximation error, which in turn can induce setpoint tracking bias and/or system constraint violations. 
In this work we address the problems of setpoint tracking and robust constraint satisfaction when using reduced order model predictive control (ROMPC), where the robustness is with respect to bounded disturbances, state estimation error, \emph{and} model reduction error.

{\em Related Work:}
The use of MPC to regulate constrained systems has been studied extensively over the last several decades (see \cite{RawlingsMayne2017} and references therein). MPC methods for tracking have also been studied; for example in \cite{MaederBorrelliEtAl2009} an MPC algorithm is developed that guarantees offset-free tracking of asymptotically constant references at steady state, even with model mismatch. Additionally, setpoint tracking is addressed in \cite{AlvaradoLimonEtAl2007} with a robust output MPC scheme, albeit without considering modeling errors.

Reduced order modeling \cite{Antoulas2005} is also a well established field. While often used for simulation, reduced order models have also been applied to control problems \cite{AmsallemDeolalikarEtAl2013, AstridHuismanEtAl2002}  using ROMPC~\cite{HovlandWillcoxEtAl2006, HovlandGravdahlEtAl2008, HovlandLovaasEtAl2008, SopasakisBernardiniEtAl2013, LoehningRebleEtAl2014, KoegelFindeisen2015b}. In \cite{SopasakisBernardiniEtAl2013}, guarantees on constraint satisfaction are given by considering the neglected dynamics as a bounded disturbance. This is extended in \cite{LoehningRebleEtAl2014} to guarantee asymptotic stability to the origin by using an error bounding system. However both \cite{SopasakisBernardiniEtAl2013} and \cite{LoehningRebleEtAl2014} require knowledge of the state of the full order system to include feedback of the model errors. A robust output MPC scheme using reduced order models was then introduced in \cite{KoegelFindeisen2015b} which uses a tube-based approach and guarantees constraint satisfaction by computing error bounds \textit{a priori}. However, the existing ROMPC literature has not yet addressed the problem of setpoint tracking, which is the focus of this work.

{\em Statement of Contributions:}
In this paper we introduce a method for setpoint tracking of linear, discrete, time-invariant \textit{high-dimensional} systems using a \textit{reduced order} model predictive control scheme (ROMPC). Setpoint tracking is accomplished by computing target state and control values for the ROMPC problem that account for the model reduction error. In addition, we introduce a method for computing error bounds that enables robust constraint satisfaction with respect to bounded disturbances, state estimation error, and model reduction error. Our method for computing the error bounds is inspired by \cite{KoegelFindeisen2015b}, but is less conservative. The proposed method is validated on a synthetic example and on an engineering-based problem where we control a flexible structure with a high-dimensional linear model generated using a finite element method.

{\em Organization:} First, Section \ref{sec:formulation} formulates the control task. Next, Section \ref{sec:control} lays out the ROMPC solution methodology. In Section \ref{sec:stability} we discuss stability of the ROMPC scheme. Section \ref{sec:errorbounds} discusses the derivation of the ROMPC constraints guaranteeing robust constraint satisfaction. Then, in Section \ref{sec:tracking} we characterize the setpoint tracking performance of our method. Section \ref{sec:examples} provides simulation results of the proposed approach. Finally, we draw our conclusions in Section \ref{sec:conclusion} and provide avenues for future work.

\section{Problem Formulation}\label{sec:formulation}
In this work we consider linear, discrete, time-invariant systems and assume bounded, additive disturbances on the process and measurements. We start with the description of the high-dimensional system that we wish to control, hereby referred to as the full order system.

\subsection{Full Order System Model} \label{fomsection}
The model for the full order system is given by:
\begin{equation} \label{fom}
\begin{split}
x^f_{k+1} &= A^fx^f_k + B^fu_k + w^f_k, \\
y_k &= C^f x^f_k + v_k, \quad z_k = H^f x^f_k,\\
\end{split}
\end{equation}
where $x^f \in \mathbb{R}^{n^f}$ is the full dimensional state, $u \in \mathbb{R}^m$ is the control input, $y \in \mathbb{R}^p$ is the measured output, $z \in \mathbb{R}^o$ are the performance variables, $w^f \in \mathbb{R}^{n^f}$ is the process noise, and $v \in \mathbb{R}^p$ is the measurement noise. This model could arise, for example, from a high-dimensional finite approximation of an infinite-dimensional model (e.g., by discretizing partial differential equations)  \cite{SopasakisBernardiniEtAl2013}. Constraints on the performance variables, control, process, and measurement noise are defined by bounded convex polytopes:
\begin{align}
z_k \in \mathcal{Z}, \:\: u_k \in \mathcal{U}, \label{const} \\
w^f_k \in \mathcal{W}, \:\: v_k \in \mathcal{V}. \label{noise}
\end{align}
It is assumed that the pair $(A^f, B^f)$ is stabilizable. 

We define the tracking performance variables $z_k^r \in \mathbb{R}^t$ as $z_k^r := Tz_k$, where the matrix $T \in \mathbb{R}^{t \times o}$ is defined by taking the $t$ rows of the identity matrix $I_o$ corresponding to the indices of the performance variables we wish to track. The objective then is to control the system (\ref{fom}) such that the tracking performance variables $z^r_k$ track a desired setpoint (i.e., constant signal) $r$. In the absence of noise, perfectly tracking an arbitrary setpoint requires the system  (\ref{fom}) to reach a steady state, defined by $x^f_{\infty}$ and $u_{\infty}$. These quantities are given as the solution to the following linear system:
\begin{equation} \label{steadystate}
S_f \begin{bmatrix}
x^f_{\infty} \\ u_{\infty}
\end{bmatrix} = 
\begin{bmatrix}
0 \\ r
\end{bmatrix}, \quad S_f=\begin{bmatrix}
A^f - I & B^f \\ TH^f & 0
\end{bmatrix}.
\end{equation}
To ensure that~\eqref{steadystate} has a unique solution, we assume that the matrix $S_f$ is square (i.e., number of tracking variables $t$ is equal to the number of control inputs $m$), and full rank.

\subsection{Setpoint Tracking}
The setpoint tracking problem for a \textit{constrained} system is often addressed through the use of model predictive control. Let $\hat{x}^f_k$ denote the state estimate as computed via an observer based on the system (\ref{fom}). A standard approach, as proposed in \cite{MayneRakovicEtAl2006}, entails solving the optimal control problem
\begin{equation} \label{highordercontrol}
\begin{split}
\underset{\mathbf{x^f_k}, \mathbf{u_k}}{\text{min.}} \:\:& V(\mathbf{x^f_k}, \mathbf{u_k}, x^f_{\infty}, u_{\infty}), \\
\text{subject to} \:\: & x^f_{i+1|k} = A^fx^f_{i|k} + B^fu_{i|k},  \\
& H^fx^f_{i|k} \in \mathcal{Z}, \:\: u_{i|k} \in \mathcal{U}, \\
& x^f_{k+N|k} \in \mathcal{X}_f, \quad \hat{x}^f_k \in x^f_{k|k} \oplus \mathcal{D}^f, \\
\end{split}
\end{equation}
over horizon $N$, where the decision variables are $\mathbf{x^f_k} := [x^f_{k|k},\dots,x^f_{k+N|k}]$ and $\mathbf{u_k} := [u_{k|k},\dots,u_{k+N-1|k}]$,  $i=k,\dots,k+N-1$, the sets $\mathcal{X}_f$ and $\mathcal{D}^f$ are discussed below, and $\oplus$ denotes Minkowski addition. The objective function is given by
\begin{equation}
\begin{split}
&V(\mathbf{x^f_k}, \mathbf{u_k}, x^f_{\infty}, u_{\infty}) = \\
&||\delta x^f_{k+N|k}||^2_P + \sum_{j=k}^{k+N-1}||\delta x^f_{j|k}||^2_Q + ||\delta u_{j|k}||^2_R,
\end{split}
\end{equation}
where $\delta x^f = x^f - x^f_{\infty}$, $\delta u = u - u_{\infty}$, and $P$, $Q$, and $R$ are positive definite cost matrices. The values of $x^f_{\infty}$ and $u_{\infty}$ are given by a solution to  the linear system (\ref{steadystate}). The terminal set, $\mathcal{X}_f$, along with the terminal cost $P$, are designed to provide stability properties to the algorithm. Lastly, the set $\mathcal{D}^f$ is a bound on the error $\hat{x}^f_k - x^f_k$. Further details, including computation of the sets $\mathcal{D}^f$ and $\mathcal{X}_f$ are discussed in \cite{MayneRakovicEtAl2006}.

As the dimension of the system (\ref{fom}) increases, so does the size of the optimal control problem (\ref{highordercontrol}). Therefore, when the system is high-dimensional, the method proposed in \cite{MayneRakovicEtAl2006} may be too computationally limiting to use for real-time control. To address these situations, we propose to leverage recent work in MPC based on \textit{reduced order models} to solve the setpoint tracking problem.

\subsection{Reduced Order Model}
The \textit{nominal} reduced order model for the full order system (\ref{fom}) is defined by
\begin{equation} \label{rom}
\begin{split}
\bar{x}_{k+1} &= A\bar{x}_k + B\bar{u}_k, \\
\bar{y}_k &= C \bar{x}_k, \quad \bar{z}_k = H \bar{x}_k, \\
\end{split}
\end{equation}
where $\bar{x} \in \mathbb{R}^n$ is the nominal reduced order state, $\bar{u}\in \mathbb{R}^m$ is the control input to the reduced order system,  $\bar{y} \in \mathbb{R}^p$ is the nominal output based on the reduced model, and $\bar{z} \in \mathbb{R}^o$ is the expected performance variable based on the reduced model. The reduced order system matrices are $A\in \mathbb{R}^{n \times n}$, $B\in \mathbb{R}^{n \times m}$, $C\in \mathbb{R}^{p \times n}$, and $H\in \mathbb{R}^{o \times n}$, where it is often the case that $n \ll n^f$. Note that the control input $\bar{u}$, output $\bar{y}$, and performance variable $\bar{z}$ in the reduced order model are of the same dimensionality as the input $u$ and output $y$ of the full order system. In this work, no assumptions are made on the specific type of model order reduction techniques used to obtain (\ref{rom}). However, it is assumed that the pair $(A, B)$ is controllable and that the pair $(A, C)$ is observable.

To summarize, the full order model (\ref{fom}) is the system to be controlled such the tracking performance variables, $z^k_r$, track a desired setpoint. The reduced order model (\ref{rom}) is the system used to \textit{design} the model predictive controller.

\section{Reduced Order MPC (ROMPC)}\label{sec:control}
In this section, we present the key blocks of our control methodology: a {\em reduced order} observer (Section \ref{subsec:ob}) and a {\em reduced order} control law (Section \ref{subsec:co}), and the corresponding ROMPC problem (Section \ref{subsec:rompc}).

\subsection{Reduced Order Observer}\label{subsec:ob}
The state of the full order system is not assumed to be known, and estimating the full order state $x^f$ requires a high-dimensional observer. Instead we use the reduced order model (\ref{rom}) in a Luenberger observer to estimate the nominal \textit{reduced order} state $\bar{x}$ 
\begin{equation} \label{estimator}
\hat{x}_{k+1} = A\hat{x}_k + Bu_k + L(y_k - C\hat{x}_k),
\end{equation}
where $\hat{x}_k$ is the reduced order state estimate, and $y_k$, $u_k$ are the measurement and control from the \textit{full order system} (\ref{fom}). The gain $L$ is chosen such that $A-LC$ is Schur stable.

\subsection{Reduced Order Controller}\label{subsec:co}
The overall control strategy is to recursively generate a nominal control trajectory over a horizon $N$, namely $\mathbf{\bar{u}^*_k} := [\bar{u}^*_k,\dots,\bar{u}^*_{k+N-1}]$, along with a nominal state trajectory, namely  $\mathbf{\bar{x}^*_k} := [\bar{x}^*_k,\dots,\bar{x}^*_{k+N}]$, by solving a reduced order MPC problem (detailed in Section \ref{subsec:rompc}). The nominal state trajectory is then tracked using a linear feedback controller based on the reduced order state estimate $\hat{x}_k$. The reduced order controller is
\begin{equation} \label{controller}
u_k = \bar{u}^*_k + K(\hat{x}_k - \bar{x}^*_k).
\end{equation}
The gain matrix $K$ is computed such that $A+BK$ is Schur stable.

\subsection{ROMPC Problem}\label{subsec:rompc}
The nominal state and control trajectories ($\mathbf{\bar{x}^*_k}, \mathbf{\bar{u}^*_k}$) are computed by solving a {\em reduced order MPC problem} (i.e., an MPC problem based on the reduced order model (\ref{rom})). While this improves the computational performance, it also adds several challenges to the design of the optimization problem. First, setpoint tracking is non-trivial. Target values for the reduced order state and control must be computed that account for both the observer and model reduction error. Second, the \textit{reduced order} MPC problem must guarantee that the \textit{full order} system constraints (\ref{const}) are satisfied.

Specifically, the nominal state and control trajectories ($\mathbf{\bar{x}^*_k}, \mathbf{\bar{u}^*_k}$) at time step $k$ are computed by solving the optimal control problem: 
\begin{equation} \label{optimalcontrol}
\begin{split}
V^*_k(\bar{x}_k, \bar{x}_{\infty}, \bar{u}_{\infty}) := & \underset{\mathbf{\bar{x}_k}, \mathbf{\bar{u}_k}}{\text{min.}} \:\: V(\mathbf{\bar{x}_k}, \mathbf{\bar{u}_k}, \bar{x}_{\infty}, \bar{u}_{\infty}), \\
\text{subject to} \:\: & \bar{x}_{i+1|k} = A\bar{x}_{i|k} + B\bar{u}_{i|k},  \\
& H\bar{x}_{i|k} \in \bar{\mathcal{Z}}, \:\: \bar{u}_{i|k} \in \bar{\mathcal{U}}, \\
& \bar{x}_{k+N|k} \in \bar{\mathcal{X}}_f,\: \bar{x}_{k|k} = \bar{x}_k, \\
\end{split}
\end{equation}
where $i=k,\dots,k+N-1$. The decision variables are the nominal states $\mathbf{\bar{x}_k} = [\bar{x}_{k|k},\dots,\bar{x}_{k+N|k}]$ and the nominal control inputs $\mathbf{\bar{u}_k} = [\bar{u}_{k|k},\dots,\bar{u}_{k+N-1|k}]$, and the first argument $\bar{x}_k$ is the current nominal reduced order state whose dynamics are given by (\ref{rom}). The objective function is quadratic and given by
\begin{equation} \label{cost}
\begin{split}
&V(\mathbf{\bar{x}_k}, \mathbf{\bar{u}_k}, \bar{x}_{\infty}, \bar{u}_{\infty}) = \\
&||\delta \bar{x}_{k+N|k}||^2_P + \sum_{j=k}^{k+N-1}||\delta \bar{x}_{j|k}||^2_Q + ||\delta \bar{u}_{j|k}||^2_R,
\end{split}
\end{equation}
where $\delta \bar{x} = \bar{x} - \bar{x}_{\infty}$, $\delta \bar{u} = \bar{u} - \bar{u}_{\infty}$, and the penalty matrices $P$, $Q$, and $R$ are positive definite. The nominal reduced order performance variables $\bar{z} = H\bar{x}$ and control $\bar{u}$ in (\ref{optimalcontrol}) are constrained to lie in the sets $\bar{\mathcal{Z}}$ and $\bar{\mathcal{U}}$, respectively. These sets are tightened versions of the constraints $\mathcal{Z}$ and $\mathcal{U}$. The use of tightened constraint sets is required so that the controller (\ref{controller}) can robustly ensure constraint satisfaction for the full order system in the presence of bounded disturbances, estimation error, and model reduction error. The computation of the sets $\bar{\mathcal{Z}}$ and $\bar{\mathcal{U}}$ is discussed in Section \ref{sec:errorbounds}. The second and third arguments to the optimal control problem, $\bar{x}_{\infty}$, $\bar{u}_{\infty}$, are the nominal reduced order system target values. These values are chosen to enable setpoint tracking by accounting for model reduction errors, and their computation is discussed in Section \ref{trackingcond}.

To guarantee robust constraint satisfaction and enable setpoint tracking, the ROMPC problem must also be recursively feasible and stable.
Recursive feasibility and stability of the ROMPC scheme is guaranteed through the design of the terminal cost matrix $P$ and the terminal set $\bar{\mathcal{X}}_f$ (as is typical in MPC schemes \cite{RawlingsMayne2017}) and is discussed next in Section \ref{sec:stability}.

\section{ROMPC Stability}\label{sec:stability}
Having introduced the reduced order control problem, we now discuss the design of the ROMPC problem (\ref{optimalcontrol}) to guarantee closed-loop stability (i.e., it is recursively feasible and has guaranteed convergence). Specifically, we use the well-known approach presented in \cite{RawlingsMayne2017} to design a terminal controller $\kappa(\bar{x})$, a terminal set $\bar{\mathcal{X}}_f$, and a terminal cost $V_f(\bar{x}) = ||\bar{x} - \bar{x}_{\infty}||^2_P$ that satisfy the following two conditions:
\begin{equation} \label{fc}
\begin{split}
A\bar{x} + B\kappa(\bar{x}) \subseteq \bar{\mathcal{X}}_f,& \:\: \forall \bar{x} \in \bar{\mathcal{X}}_f, \\
H\bar{\mathcal{X}}_f \subseteq \bar{\mathcal{Z}}, \:\: \kappa(\bar{x}) \subseteq \bar{\mathcal{U}},& \:\: \forall \bar{x} \in \bar{\mathcal{X}}_f, \text{ and} \\
\end{split}
\end{equation}
\begin{equation} \label{sc}
V_f(\bar{x}_{k+1}) + l(\bar{x}_k, \kappa(\bar{x}_k)) \leq V_f(\bar{x}_k), \:\: \forall \bar{x}_k \in \bar{\mathcal{X}}_f,
\end{equation}
where $l(\bar{x}, \bar{u}) = ||\bar{x} - \bar{x}_{\infty}||^2_Q + ||\bar{u} - \bar{u}_{\infty}||^2_R$ is the stage cost.

To guarantee recursive feasibility, the first condition, (\ref{fc}), requires that the terminal set is positively invariant for the nominal dynamics under the terminal controller and that the resulting terminal control is admissible over the terminal set. The second condition, (\ref{sc}), is used to guarantee convergence.
We now discuss the design of the terminal controller $\kappa(\bar{x})$, the terminal set $\bar{\mathcal{X}}_f$, and the terminal cost $V_f(\bar{x})$ that together satisfy these conditions.

Consider the unconstrained infinite-horizon LQR problem with cost weights $Q,R$ for the error system:
\begin{equation} \label{terminaldynamics}
\delta \bar{x}_{k+1} = A\delta \bar{x}_k + B\delta \bar{u}_k,
\end{equation}
where $\delta \bar{x} = \bar{x}-\bar{x}_{\infty}$ and $\delta \bar{u}_k = \bar{u}_k - \bar{u}_{\infty}$. Let $P$ be the solution of the associated discrete algebraic Riccati equation, and set the terminal cost to be $V_f(\bar{x}) = (\bar{x}-\bar{x}_{\infty})^TP(\bar{x}-\bar{x}_{\infty}) = \delta \bar{x}^TP\delta \bar{x}$. Let $K_f$ be the optimal control gain for the associated infinite horizon LQR problem, and parameterize the terminal controller as:
\begin{equation} \label{termcont}
\kappa(\bar{x}) = K_f \delta \bar{x} + \bar{u}_{\infty}, \:\: \forall \bar{x} \in \bar{\mathcal{X}}_f.
\end{equation}
To address condition~\eqref{fc}, we introduce the set $\Delta$ such that $\bar{\mathcal{X}}_f := \Delta \oplus \{\bar{x}_{\infty}\}$. The set $\bar{\mathcal{X}}_f$ that satisfies the stabilizing condition (\ref{fc}) under the terminal control (\ref{termcont}) is then found by computing a set $\Delta$ (see \cite{RawlingsMayne2017}) that satisfies:
\begin{equation*}
\begin{split}
(A_{K_f})\Delta \subseteq \Delta, \:\: H\Delta \subseteq \bar{\mathcal{Z}} \ominus \{H\bar{x}_{\infty}\},\:\:  K_f \Delta \subseteq \bar{\mathcal{U}} \ominus \{\bar{u}_{\infty}\},
\end{split}
\end{equation*}
where $A_{K_f} = A+BK_f$. These conditions are simply the conditions (\ref{fc}) with the substitution of the terminal controller and the definition $\bar{\mathcal{X}}_f := \Delta \oplus \{\bar{x}_{\infty}\}$.

Finally, for condition~\eqref{sc}, we require:
\begin{equation}
\begin{split}
(A\bar{x}_k + B\kappa(\bar{x}_k) - \bar{x}_{\infty})^TP(A\bar{x}_k + B\kappa(\bar{x}_k) - \bar{x}_{\infty})& \\ + \delta \bar{x}_k^TQ\delta \bar{x}_k + (\kappa(\bar{x}_k)-\bar{u}_{\infty})^TR(\kappa(\bar{x}_k)-\bar{u}_{\infty})&  \\
\leq \delta \bar{x}_k^TP\delta \bar{x}_k&.
\end{split}
\end{equation}
Substituting the terminal control law (\ref{termcont}), we require:
\begin{equation}
\begin{split}
\delta \bar{x}_k^T A_{K_f}&^T P A_{K_f} \delta \bar{x}_k + \delta \bar{x}_k^TQ\delta \bar{x}_k \\
+&\delta \bar{x}_k^TA_{K_f}^T R A_{K_f} \delta \bar{x}_k \leq \delta \bar{x}_k^T P \delta \bar{x}_k.
\end{split}
\end{equation}
Since the chosen $P$ and $K_f$ are optimal solutions of the Riccati equation, this expression holds with equality.

By the results in \cite{RawlingsMayne2017}, the ROMPC problem (\ref{optimalcontrol}) will be recursively feasible and exponentially stable, i.e., $\delta \bar{x} \rightarrow 0$ and $\delta \bar{u} \rightarrow 0$.

\section{Ensuring Robust Constraint Satisfaction} \label{sec:errorbounds}
Guaranteeing that the constraints (\ref{const}) on the \textit{full order system} remain satisfied when using reduced order MPC is challenging due to disturbances, model reduction error, and estimation error. In this section we discuss how to compute bounds for these errors and ensure robust constraint satisfaction by computing tightened constraint sets $\bar{\mathcal{Z}}$ and $\bar{\mathcal{U}}$ for the ROMPC problem (\ref{optimalcontrol}). 

Consider writing the original constraints (\ref{const}) as
\begin{equation*}
\begin{split}
H^fx^f_k \in \mathcal{Z} &\iff \hat{e}_k + Hd_k + H\bar{x}_k \in \mathcal{Z}, \\
u_k \in \mathcal{U} &\iff \bar{u}_k + Kd_k \in \mathcal{U},
\end{split}
\end{equation*}
where $\hat{e}_k = H^fx^f_k -H\hat{x}_k$ will be referred to as the estimation error (for the performance variable) and $d_k = \hat{x}_k - \bar{x}_k$ will be referred to as the control error.
From these equations it is apparent that by finding bounds on the estimation and control errors, the performance and control constraints can be appropriately tightened to give constraints on $\bar{x}_k$ and $\bar{u}_k$. In this work we extend the approach used in \cite{KoegelFindeisen2015b} to compute tightened constraint sets, $\bar{\mathcal{Z}}$ and $\bar{\mathcal{U}}$, that are less conservative.

\subsection{Control Error Bound} \label{contbound}
For the control constraints, consider a tightened constraint set $\bar{\mathcal{U}} := \mathcal{U} \ominus K\mathcal{D}$, where  $\mathcal{D}$ is a set bounding the control error $d_k$ and $\ominus$ denotes the Pontryagin difference.
The dynamics of the control error, using the control law (\ref{controller}), are
\begin{equation*}
d_{k+1} = (A+BK)d_k + LC^fx^f_k - LC\hat{x}_k + Lv_k.
\end{equation*}

In \cite{KoegelFindeisen2015b}, the approach is to compute a bound $\mathcal{G}$ on an auxiliary disturbance $g_k = LC^fx^f_k - LC\hat{x}_k + Lv_k$ by solving linear programs\footnote{Note that in \cite{KoegelFindeisen2015b} a fixed interval least square estimator is used, so $g_k$ takes on a different form.}. Then, a bounding set $\mathcal{D}$ is computed that satisfies the condition $(A+BK)\mathcal{D} + \mathcal{G} \subseteq \mathcal{D}$ (see \cite{RakovicKerriganEtAl2005}).

Finding the bound $\mathcal{G}$ and computing $\mathcal{D}$ in a \textit{sequential} fashion introduce some conservatism. We instead compute the bounding set $\mathcal{D}$ \textit{directly} by leveraging the fact that the dynamics of $g_k$ are known through the full order system dynamics (\ref{fom}) and the observer dynamics (\ref{estimator}). This reduces conservatism by removing the worst-case consideration of the auxiliary disturbance $g_k$.
In particular, our approach is to approximate $\mathcal{D}$ by a bounded convex polytope where each side is defined by solving a linear program of the form
\begin{subequations} \label{computeD} 
\begin{align} 
&\underset{d_i, \bar{x}_j, \hat{x}_j, x^f_j, u_i, v_i, w^f_i}{\text{maximize}} \:\:  \theta^{\mathcal{D}^T}_l (\hat{x}_k - \bar{x}_k) \tag{\ref{computeD}} \\
&\text{subject to} \:\: \nonumber \\
&x^f_{i+1} = A^fx^f_i + B^fu_i + w^f_i, \label{cD:1} \\
&\hat{x}_{i+1} = (A-LC)\hat{x}_i + Bu_i + LC^fx^f_i + Lv_i, \label{cD:2}\\
&\bar{x}_{i+1} = A\bar{x}_i + B(u_i - K(\hat{x}_i - \bar{x}_i)) \label{cD:3}\\
&d_i = \hat{x}_i - \bar{x}_i, \label{cD:4}\\
&H^fx^f_i \in \mathcal{Z}, \:\:
u_i \in \mathcal{U}, \:\:
v_i \in \mathcal{V}, \:\:
w^f_i \in \mathcal{W}, \label{cD:5}\\
&H\bar{x}_i \in \mathcal{Z}, \:\:
d_i \in \mathcal{D}_0, \label{cD:6} 
\end{align}
\end{subequations}
where the decision variables are $d_i$, $\bar{x}_j$, $\hat{x}_j$, $x^f_j$, $u_i$, $v_i$, $w^f_i$, $i \in [k-\tau,\dots,k-1]$, and $j \in [k-\tau,\dots,k]$. The time horizon $\tau$ and the set $\mathcal{D}_0$ are user-defined parameters (discussed later). The vectors $\theta^{\mathcal{D}}_l$ define the direction normal to the $l^{\text{th}}$ face of the bounding polytope, which are also design parameters (e.g., the standard basis vectors).

The constraints $d_i \in \mathcal{D}_0$ and $H\bar{x}_i \in \mathcal{Z}$ are necessary for the linear program to be bounded. Note that $H\bar{x}_i \in \mathcal{Z}$ is simply a conservative estimate of the constraint imposed in the ROMPC scheme that $H\bar{x} \in \bar{\mathcal{Z}}$ (since $\bar{\mathcal{Z}}$ is a tightened set of $\mathcal{Z}$ it follows that $\bar{\mathcal{Z}} \subseteq \mathcal{Z}$).
The set $\mathcal{D}$, parameterized by $(\tau,\mathcal{D}_0)$, is then given by the polytope  $\mathcal{D}(\tau, \mathcal{D}_0)$ defined as:
\begin{equation} \label{setD}
\mathcal{D}(\tau, \mathcal{D}_0) := \{d \: | \: \Theta^{\mathcal{D}} d \leq \gamma^{\mathcal{D}} \},
\end{equation}
where the $l^{\text{th}}$ row of $\Theta^{\mathcal{D}}$ is given by $\theta^{\mathcal{D}^T}_l$, and $\gamma^{\mathcal{D}}_l$ is the optimal value of the linear program (\ref{computeD}). 

To illustrate the advantage of computing the bound using the linear program (\ref{computeD}) we consider the system described in Section \ref{sec:numericalex}. First, the set $\mathcal{D}$ is computed as discussed in \cite{KoegelFindeisen2015b} and is displayed in red in Figure \ref{setcompare}. Second, the set $\mathcal{D}$ is found using the proposed method and is displayed in green in Figure \ref{setcompare}. It can be seen that the bound on $d_k$ is significantly tighter for this system using our modified method.

\subsection{Total Error Bound} \label{totbound}
For the performance variable constraints, we consider a tightened constraint set $\bar{\mathcal{Z}} := \mathcal{Z} \ominus \mathcal{E}$, where  $\mathcal{E}$ is a set bounding the total error $e_k = \hat{e}_k + Hd_k$. The approach in \cite{KoegelFindeisen2015b} is to compute a bound $\hat{\mathcal{E}}$ on $\hat{e}_k$ by solving linear programs, and then to compute $\mathcal{E}$ using a Minkowski sum $\mathcal{E} = \hat{\mathcal{E}} \oplus H\mathcal{D}$. Once again, \textit{sequentially} computing $\hat{\mathcal{E}}$ and $\mathcal{D}$ introduces conservatism because it ignores the fact that the dynamics of $\hat{e}$ and $d$ are coupled. Therefore we \textit{directly} compute $\mathcal{E}$ as a bounded convex polytope, where each side is defined by solving a linear program of the form
\begin{align} \label{computeE}
&\underset{d_i, \bar{x}_j, \hat{x}_j, x^f_j, u_i, v_i, w^f_i}{\text{maximize}} \:\: \theta^{\mathcal{E}^T}_l(H^f x^f_k - H \bar{x}_k) \\
&\text{subject to} \:\:  \eqref{cD:1}\text{-}\eqref{cD:6}, \nonumber
\end{align}
where the decision variables are $d_i$, $\bar{x}_j$, $\hat{x}_j$, $x^f_j$, $u_i$, $v_i$, $w^f_i$, $i \in [k-\tau,\dots,k-1]$, and $j \in [k-\tau,\dots,k]$. Note that $e_k = H^fx^f_k -H\bar{x}_k$, and $\tau$, $\mathcal{D}_0$ are the same parameters used when solving the linear program (\ref{computeD}) for $\mathcal{D}$. The vectors $\theta^{\mathcal{E}}_l$ define the direction normal to the $l^{\text{th}}$ face of the bounding polytope. The set $\mathcal{E}$ is then given by the polytope  $\mathcal{E}(\tau, \mathcal{D}_0)$ defined as:
\begin{equation} \label{setE}
\mathcal{E}(\tau, \mathcal{D}_0) := \{e \: | \: \Theta^{\mathcal{E}} e \leq \gamma^{\mathcal{E}} \},
\end{equation}
where the $l^{\text{th}}$ row of $\Theta^{\mathcal{E}}$ is given by $\theta^{\mathcal{E}^T}_l$ and $\gamma^{\mathcal{E}}_l$ is the optimal value of the linear program (\ref{computeE}). 

To illustrate the advantage of computing the total error bound $\mathcal{E}$ \textit{directly} we once again consider the system described in Section \ref{sec:numericalex}. For comparison, first $\mathcal{E} = \hat{\mathcal{E}} \oplus H\mathcal{D}$ is computed using the approach in \cite{KoegelFindeisen2015b} and is plotted in red in Figure \ref{setcompare}. Second, $\mathcal{E} = \hat{\mathcal{E}} \oplus H\mathcal{D}$ is computed where $\mathcal{D}$ is defined using our proposed method (i.e., from (\ref{computeD})), and is plotted in blue. Finally, the total error bound $\mathcal{E}$ is computed \textit{directly} using linear programs of the form (\ref{computeE}) and is plotted in green. It is apparent that direct computation of $\mathcal{E}$ provides a tighter polytopic approximation to the bound on the error $e_k$ than the other approaches. Note that computation of these linear programs (\ref{computeD}) and (\ref{computeE}) relies on the full order dynamics, which could potentially lead to large linear programs. However, these computations only need to be done once, and are done offline. 

\begin{figure}[t]
    \centering
    \begin{subfigure}[t]{.32\textwidth}
        \includegraphics[width=\textwidth]{./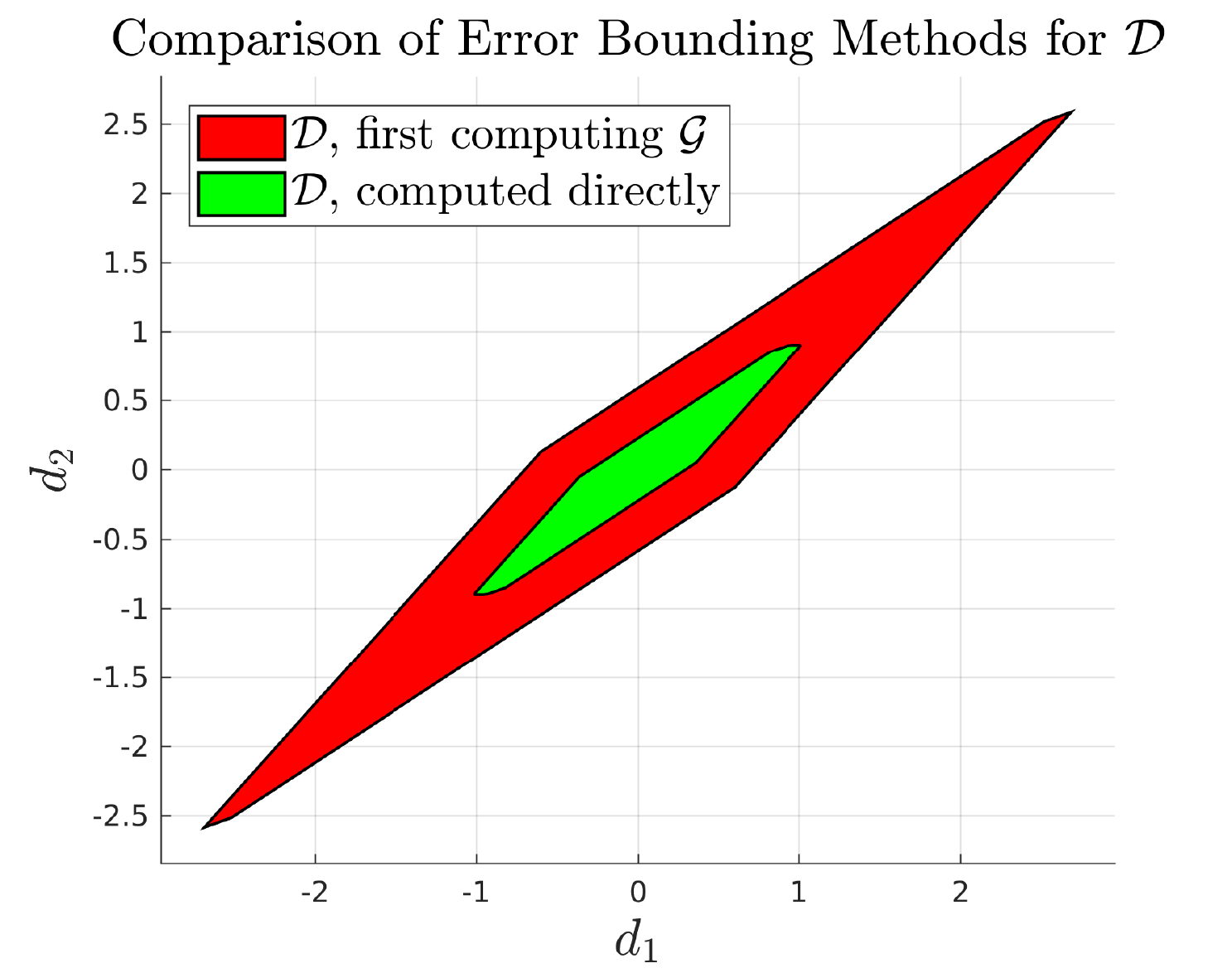}
    \end{subfigure}
    \begin{subfigure}[t]{.32\textwidth}
        \includegraphics[width=\textwidth]{./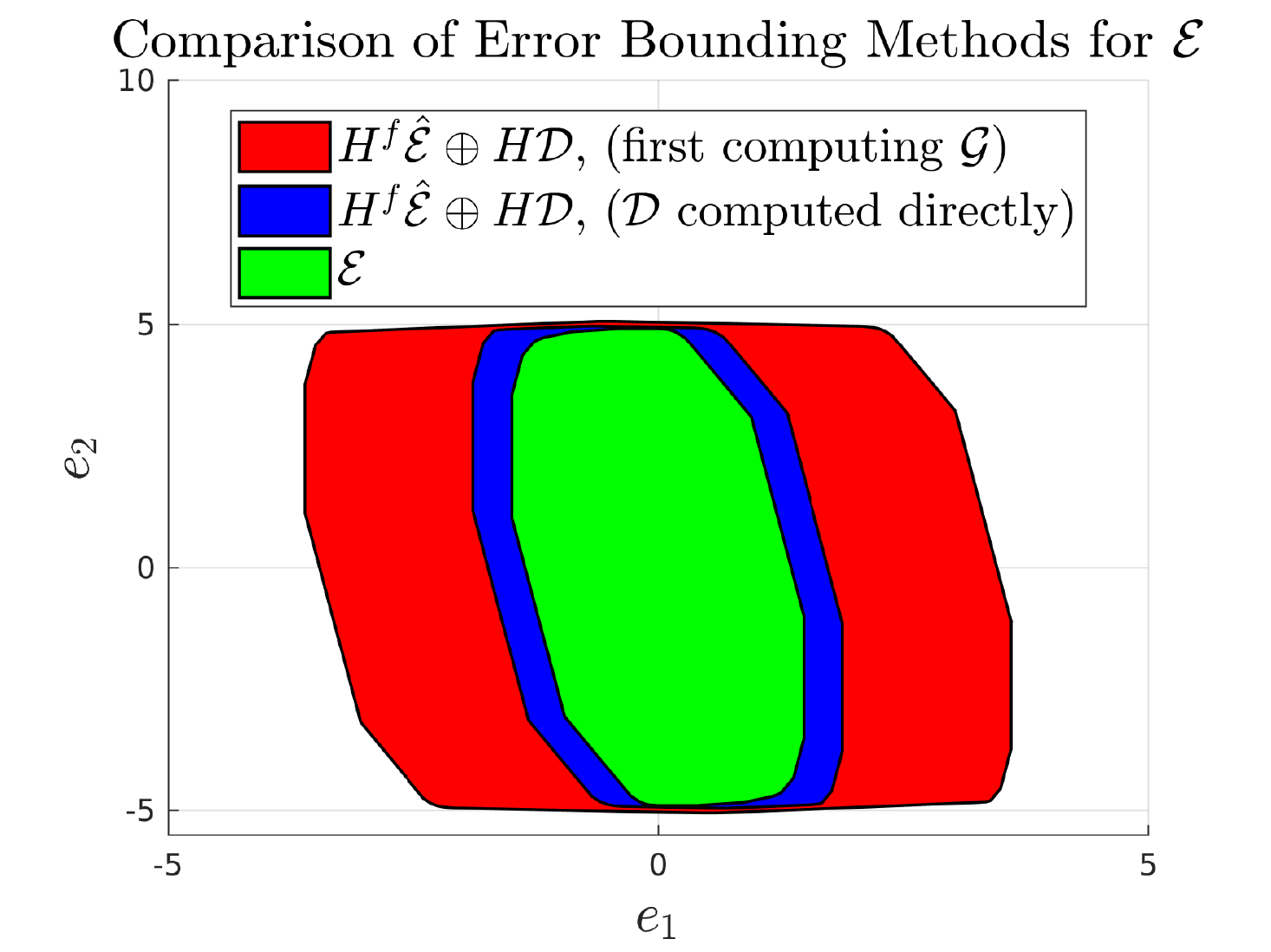}
    \end{subfigure}
    \caption{(Above) The error bounds $\mathcal{D}$ using a previously proposed method (red) and modified method (green). (Below) The error bounds used to tighten $\mathcal{Z}$ using a previously proposed method (red), combined method (blue), and modified method (green).}
\label{setcompare}
\vspace{-.20in}
\end{figure}

\subsection{Robust Constraint Satisfaction}
Under mild assumptions we now show that with the tightened constraints $\bar{\mathcal{Z}}$ and $\bar{\mathcal{U}}$ resulting from the pre-computed bounds $\mathcal{D} = \mathcal{D}(\tau, \mathcal{D}_0)$ and $\mathcal{E} = \mathcal{E}(\tau, \mathcal{D}_0)$, the combined control system consisting of the state estimator (\ref{estimator}), the control law (\ref{controller}), and the ROMPC problem (\ref{optimalcontrol}) guarantees robust constraint satisfaction for the \textit{full order system}.
\begin{ass} \label{consistency}
At some arbitrary time $k=k_d$, the full order system has followed an admissible trajectory $\{x^f_i\}^{k_d}_{i=k_d-\tau}$ over the past $\tau$ time steps, satisfying the dynamics (\ref{fom}) and constraints (\ref{const}) under admissible disturbances (\ref{noise}).
\end{ass}
\begin{ass} \label{romconsistency}
With the control $\{u_i\}^{k_d-1}_{i=k_d-\tau}$ and measurement $\{y_i\}^{k_d}_{i=k_d-\tau}$ sequences corresponding to the admissible trajectory in Assumption \ref{consistency}, the state estimator sequence $\{\hat{x}_i\}^{k_d}_{i=k_d-\tau}$ and the nominal reduced order system trajectory $\{\bar{x}_i\}^{k_d}_{i=k_d-\tau}$ under the control law $\bar{u}_i = u_i - K(\hat{x}_i-\bar{x}_i)$ satisfy $H\bar{x}_i \in \mathcal{Z}$ and $\hat{x}_i - \bar{x}_i \in \mathcal{D}_0$ for $i \in [k_d-\tau,\dots,k_d]$.
\end{ass}
\begin{ass} \label{feasibleMPC}
At time $k = k_d$ the ROMPC problem (\ref{optimalcontrol}), with the initial state $\bar{x}_{k_d}$ defined by the sequence $\{\bar{x}_i\}^{k_d}_{i=k_d-\tau}$ from Assumption \ref{romconsistency}, has a feasible solution.
\end{ass}

Note that Assumption  \ref{consistency} is often valid in practice, for example if the system starts at any feasible steady state. 
Additionally, because the estimator and the nominal reduced order system are not physical systems, Assumption \ref{romconsistency} can be satisfied by artificially choosing any $\hat{x}_{k_d-\tau}$ and $\bar{x}_{k_d-\tau}$ that yield sequences $\{\hat{x}_i\}^{k_d}_{i=k_d-\tau}$, $\{\bar{x}_i\}^{k_d}_{i=k_d-\tau}$ satisfying the assumption conditions.
\begin{thm}[Robust Constraint Satisfaction] \label{Robusttheorem}
For an arbitrary time $k=k_d$ let Assumptions  \ref{consistency}, \ref{romconsistency}, and \ref{feasibleMPC} hold. If for $k \geq k_d$ the control law (\ref{controller}) is used to control the full order system where $\bar{\mathcal{U}} = \mathcal{U}\ominus K\mathcal{D}$ and $\bar{\mathcal{Z}} = \mathcal{Z}\ominus \mathcal{E}$ in the ROMPC problem (\ref{optimalcontrol}), and if $\mathcal{D}$ and $\mathcal{E}$ are defined by solving the linear programs (\ref{computeD}) and (\ref{computeE}) and $\mathcal{D} \subseteq \mathcal{D}_0$, then the constraints (\ref{const}) will be robustly satisfied for all $k \geq k_d$.
\end{thm}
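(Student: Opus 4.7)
The plan is to proceed by induction on $k \geq k_d$, carrying three invariants at each step: (i) the full order trajectory $\{x^f_i\}_{i=k-\tau}^{k}$ is admissible (satisfies (\ref{fom}) and (\ref{const}) with admissible noises (\ref{noise})); (ii) the reduced order sequences satisfy $H\bar{x}_i \in \mathcal{Z}$ and $\hat{x}_i - \bar{x}_i \in \mathcal{D}_0$ for $i \in [k-\tau,\dots,k]$; and (iii) the ROMPC problem (\ref{optimalcontrol}) is feasible at time $k$. The base case $k = k_d$ is precisely Assumptions~\ref{consistency}, \ref{romconsistency}, and \ref{feasibleMPC}.

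The core observation for the main step is that, under invariants (i) and (ii) at time $k$, the physical trajectory together with the reduced order sequences forms a \emph{feasible point} of the linear programs (\ref{computeD}) and (\ref{computeE}). By optimality of those LPs it follows that $d_k \in \mathcal{D}(\tau,\mathcal{D}_0)$ and $e_k := H^f x^f_k - H\bar{x}_k \in \mathcal{E}(\tau,\mathcal{D}_0)$. Invariant (iii) then yields $H\bar{x}_k \in \bar{\mathcal{Z}} = \mathcal{Z} \ominus \mathcal{E}$ and the applied nominal control $\bar{u}^*_k \in \bar{\mathcal{U}} = \mathcal{U} \ominus K\mathcal{D}$, so Minkowski arithmetic gives
\begin{equation*}
H^f x^f_k = H\bar{x}_k + e_k \in (\mathcal{Z} \ominus \mathcal{E}) \oplus \mathcal{E} \subseteq \mathcal{Z},
\end{equation*}
and $u_k = \bar{u}^*_k + K d_k \in (\mathcal{U} \ominus K\mathcal{D}) \oplus K\mathcal{D} \subseteq \mathcal{U}$, which is robust constraint satisfaction at time $k$.

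To propagate the invariants to $k+1$, admissibility of $\{x^f_i\}_{i=k-\tau+1}^{k+1}$ follows from what was just shown plus boundedness of the disturbances; $H\bar{x}_{k+1} \in \bar{\mathcal{Z}} \subseteq \mathcal{Z}$ and $d_{k+1} \in \mathcal{D} \subseteq \mathcal{D}_0$ (using the standing hypothesis $\mathcal{D} \subseteq \mathcal{D}_0$) keep invariant (ii) alive under the shifted window; and feasibility of the ROMPC at $k+1$ follows from the terminal-set construction of Section~\ref{sec:stability}, which ensures recursive feasibility through the standard shifted-plus-terminal-controller candidate.

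The main obstacle will be ensuring the \emph{rolling-window consistency} that the LPs require: the window of the past $\tau$ steps used in (\ref{computeD}) and (\ref{computeE}) must always satisfy $H\bar{x}_i \in \mathcal{Z}$ and $d_i \in \mathcal{D}_0$, and these are precisely the constraints the LPs impose on their decision variables. This is what the hypothesis $\mathcal{D} \subseteq \mathcal{D}_0$ is designed to close: it guarantees that each newly computed control error inherits membership in $\mathcal{D}_0$, so the LP bounds remain valid at every subsequent step. A related but more minor point is that recursive feasibility from Section~\ref{sec:stability} must be invoked with the shifted window, which is routine since the terminal ingredients are defined with respect to the (constant) setpoint-induced targets $(\bar{x}_\infty, \bar{u}_\infty)$.
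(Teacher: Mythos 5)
Your proposal is correct and follows essentially the same argument as the paper: induction on the rolling window of conditions that make the past trajectory a feasible point of the LPs (\ref{computeD}) and (\ref{computeE}), combined with recursive feasibility of (\ref{optimalcontrol}) and the Pontryagin-difference definitions of $\bar{\mathcal{Z}}$, $\bar{\mathcal{U}}$ to conclude $z_k \in \mathcal{Z}$, $u_k \in \mathcal{U}$, with $\mathcal{D} \subseteq \mathcal{D}_0$ and $\bar{\mathcal{Z}} \subseteq \mathcal{Z}$ closing the loop on the window conditions. The only cosmetic difference is that you carry the window endpoint inside your invariants, whereas the paper states the conditions only over $[k-\tau,\dots,k-1]$, which makes the propagation slightly cleaner but does not change the substance.
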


\begin{proof}
From Assumptions \ref{romconsistency} and \ref{feasibleMPC}, and by virtue of the recursive feasibility of the ROMPC problem (\ref{optimalcontrol}), the nominal reduced order state satisfies $H\bar{x}_k \in \bar{\mathcal{Z}}$ and the nominal control satisfies $\bar{u}_k \in \bar{\mathcal{U}}$ for all $k \geq k_d$. Therefore, since $\bar{\mathcal{U}} = \mathcal{U}\ominus K \mathcal{D}$ and $\bar{\mathcal{Z}} = \mathcal{Z}\ominus \mathcal{E}$, to show $u_k \in \mathcal{U}$ and $z_k \in \mathcal{Z}$ we must show that $d_k \in \mathcal{D}$ and $e_k \in \mathcal{E}$. We prove that these inclusions hold for all $k \geq k_d$ via induction.

From the construction of the linear programs (\ref{computeD}) and (\ref{computeE}), the inclusions $d_k \in \mathcal{D}(\tau, \mathcal{D}_0)$ and $e_k \in \mathcal{E}(\tau, \mathcal{D}_0)$ are valid if the following conditions are satisfied for $i \in [k-\tau,\dots,k-1]$: (i) $H^fx^f_i \in \mathcal{Z}$, (ii) $u_i \in \mathcal{U}$, (iii) $H\bar{x}_i \in \mathcal{Z}$, and (iv) $d_i \in \mathcal{D}_0$. At time $k=k_d$ it can trivially be seen that conditions (i)-(iv) are satified from Assumptions \ref{consistency} and \ref{romconsistency}, and therefore $d_{k_d} \in \mathcal{D}$ and $e_{k_d} \in \mathcal{E}$. 

Now assume the conditions (i)-(iv) hold at some arbitrary time $k\geq k_d$ such that $d_k \in \mathcal{D}$ and $e_{k} \in \mathcal{E}$. Then since $\bar{\mathcal{Z}} \subseteq \mathcal{Z}$ and $\mathcal{D} \subseteq \mathcal{D}_0$, and since $H\bar{x}_k \in \bar{\mathcal{Z}}$ and $\bar{u}_k \in \bar{\mathcal{U}}$ by design of the ROMPC problem solved at time $k$ we have: (a) $H^fx^f_k \in \mathcal{Z}$, (b) $u_k \in \mathcal{U}$, (c) $H\bar{x}_k \in \mathcal{Z}$, and (d) $d_k \in \mathcal{D}_0$. Therefore conditions (i)-(iv) are satisfied at time $k + 1$ such that $d_{k+1} \in \mathcal{D}$ and $e_{k+1} \in \mathcal{E}$, which completes the induction.
\end{proof}

\subsection{Practical Considerations}
It is important to note that several conditions on the sets $\mathcal{D}$ and $\mathcal{E}$ must be met in order to apply the proposed control scheme in practice, namely: $\mathcal{D} \subseteq \mathcal{D}_0$ (Theorems \ref{Robusttheorem}), $\bar{\mathcal{Z}} = \mathcal{Z} \ominus \mathcal{E} \not= \emptyset$, and $\bar{\mathcal{U}} = \mathcal{U} \ominus K\mathcal{D} \not= \emptyset$. A practitioner can compute a satisfactory $\mathcal{D}$, $\mathcal{E}$ using the following algorithm:
\begin{algorithm}
\caption{Compute $\mathcal{D}$ and $\mathcal{E}$}\label{alg}
\begin{algorithmic}[1]
\Procedure{ComputeSets}{$\mathcal{Z}$, $\mathcal{U}$, $K$, $\tau$}
\State $\mathcal{D}_0 \leftarrow \{d \: | \: Kd \in \mathcal{U} \}$
\State $\mathcal{D} \leftarrow \mathcal{D}(\tau, \mathcal{D}_0)$, using (\ref{computeD})
\State $\mathcal{E} \leftarrow \mathcal{E}(\tau, \mathcal{D}_0)$, using (\ref{computeE})
\State $\bar{\mathcal{Z}} \leftarrow \mathcal{Z} \ominus \mathcal{E}$
\If {$\mathcal{D} \not\subseteq \mathcal{D}_0$ or $\bar{\mathcal{Z}} = \emptyset$}{ return Failure}
\Else { return $\mathcal{D}_0, \mathcal{D}, \mathcal{E}$}
\EndIf
\EndProcedure
\end{algorithmic}
\end{algorithm}

In this algorithm, $\mathcal{D}_0$ is defined as the largest possible set such that $\bar{\mathcal{U}} = \mathcal{U} \ominus K\mathcal{D}$ is guaranteed to be non-empty if $\mathcal{D} \subseteq \mathcal{D}_0$. Note that when $\mathcal{U}$ is polytopic, it can be defined as $\mathcal{U} = \{u \: | \: A_Uu \leq b_u \}$ and $\mathcal{D}_0$ in Algorithm \ref{alg} can be defined as $\mathcal{D}_0 = \{d \: | \: A_UKd \leq b_u \}$.

If for some $\tau$ Algorithm \ref{alg} is successful, a valid $\mathcal{D}$ and $\mathcal{E}$ have been found. The value of $\tau$ can then be continually increased and Algorithm \ref{alg} can be continually run to attempt to find smaller bounds. This is desired because smaller $\mathcal{D}$, $\mathcal{E}$ lead to less conservative constraint tightening. If Algorithm \ref{alg} returns ``Failure'', the value of $\tau$ should also be increased. However, if Algorithm \ref{alg} continues to return ``Failure'' as $\tau \rightarrow \infty$ then additional diagnostics are required. These diagnostics will attempt to determine whether the failure to guarantee robust constraint satisfaction is due to the process and measurement disturbances (\ref{noise}), or due to the design of the reduced order model (\ref{rom}). First, the practitioner should set the disturbances (\ref{noise}) to zero and again apply Algorithm \ref{alg}. If Algorithm \ref{alg} is then successful for some value $\tau$, this suggests that the disturbances (\ref{noise}) need to be reduced, either by using a higher fidelity full order model or by adding additional/more precise sensors. However if no value of $\tau$ can be found that makes Algorithm \ref{alg} successful, the practitioner should consider redesigning the reduced order model.

\section{Setpoint Tracking}\label{sec:tracking}
Having addressed ROMPC stability (Section \ref{sec:stability}) and robust constraint satisfaction (Section \ref{sec:errorbounds}), we now present results on the setpoint tracking performance. Specifically, we will show that under the bounded disturbances (\ref{noise}) the tracking variables $z^r_k$ converge to the set $\{r\} \oplus \mathcal{R}$ where $\mathcal{R}$ is a bounded convex polytope. Further, we will show that with no disturbances the ROMPC scheme converges to offset-free (i.e., zero error) tracking.

We begin in Section \ref{trackingcond} by discussing conditions for offset-free setpoint tracking under nominal (disturbance free) conditions. In this section we also describe the computation of the target states $\bar{x}_{\infty}$, $\bar{u}_{\infty}$ used in the ROMPC objective function (\ref{cost}). Then, in Section \ref{setpointconverge} we discuss the computation of the set $\mathcal{R}$ and convergence of $z^r_k$ to $\{r\} \oplus \mathcal{R}$ under bounded disturbances.

\subsection{Offset-Free Setpoint Tracking} \label{trackingcond}
In this section we discuss how to compute the ROMPC target values $\bar{x}_{\infty}$ and $\bar{u}_{\infty}$ that enable offset-free tracking at steady state for the nominal system (i.e., when no disturbances are present).

First, the full order target steady state $x^f_{\infty}$ and control $u_{\infty}$ are computed such that $z^r_{\infty} = r$ by finding the solution to the linear system (\ref{steadystate}) (assuming the desired setpoint $r$ corresponds to an admissible steady state $H^f x^f_{\infty} \in \mathcal{Z}$, $u_{\infty} \in \mathcal{U}$). For the system to reach the steady state defined by $x^f_{\infty}$, $u_{\infty}$, the observer and controller are also required to be at steady state. In the absence of disturbances, the steady-state output of (\ref{fom}) is $y_{\infty} = C^fx^f_{\infty}$. Therefore, from the observer dynamics (\ref{estimator}), the steady-state observer estimate is given by:
\begin{equation} \label{eq:estSS}
\hat{x}_{\infty} = D(Bu_{\infty} + LC^fx^f_{\infty}),
\end{equation}
where $D = (I-(A-LC))^{-1}$.

Finally, by requiring the controller (\ref{controller}) to also be at steady state, the ROMPC target states $\bar{x}_{\infty}$ and $\bar{u}_{\infty}$ that enable offset-free setpoint tracking under nominal conditions can be found by solving the system
\begin{equation} \label{nomsteadystate}
S_c \begin{bmatrix}
\bar{x}_{\infty} \\ \bar{u}_{\infty}
\end{bmatrix} = 
\begin{bmatrix}
0 \\ K\hat{x}_{\infty} - u_{\infty}
\end{bmatrix}, \quad S_c = \begin{bmatrix}
A-I & B\\ K & -I
\end{bmatrix},
\end{equation}
where it is assumed that the square matrix $S_c$ is full rank and that the ROMPC target states are feasible with respect to the tightened constraints sets: $H\bar{x}_{\infty} \in \bar{\mathcal{Z}}$, $\bar{u}_{\infty} \in \bar{\mathcal{U}}$. Note that when the steady-state equations (\ref{steadystate}), (\ref{eq:estSS}), and (\ref{nomsteadystate}) are solved, a \textit{unique} solution ($x^f_{\infty}$, $u_{\infty}$, $\hat{x}_{\infty}$, $\bar{x}_{\infty}$, $\bar{u}_{\infty}$) exists for each setpoint $r$ because the square matrix 
\begin{equation}
F=
\begin{bmatrix}
A^f - I & B^f & 0 & 0 & 0 \\
TH^f & 0 & 0 & 0 & 0 \\
DLC^f & DB & -I & 0 & 0 \\
0 & 0 & 0 & A-I & B \\
0 & I & -K & K & -I \\
\end{bmatrix}
\end{equation}
is full rank. This follows from the block lower diagonal structure of $F$ and the existing requirements that the square matrices $S_f$ and $S_c$ are full rank. 

With a unique steady state and with the ROMPC convergence results in Section \ref{sec:stability}, if the closed-loop system reaches a steady state, then it must be the unique steady state that leads to offset-free tracking. To determine if the system will reach a steady state, consider the closed-loop dynamics of the errors $\delta x^f_k = x^f_k - x^f_{\infty}$ and $\delta \hat{x}_k = \hat{x}_k - \hat{x}_{\infty}$ under the controller (\ref{controller}) with $\delta \bar{x} = 0$ and $\delta \bar{u} = 0$ (the general case with $\delta \bar{x}, \delta \bar{u} \neq 0$ has no effect on the following conclusions as these errors would only contribute to an asymptotically decaying term added to the following equation):
\begin{equation}
\begin{bmatrix}
\delta x^f_{k+1} \\ \delta \hat{x}_{k+1}
\end{bmatrix} = S_{ss}\begin{bmatrix}
\delta x^f_k \\ \delta \hat{x}_k
\end{bmatrix},
\end{equation}
where the matrix $S_{ss}$ is given by
\begin{equation}
S_{ss}=\begin{bmatrix}
A^f & B^fK \\
LC^f & A-LC+BK \\
\end{bmatrix}.
\end{equation}
From these dynamics, a sufficient condition for the closed-loop system to reach steady state is if the matrix $S_{ss}$ is Schur stable. 
This condition provides a straightforward method for a practitioner to check if convergence to offset-free tracking is guaranteed in the disturbance free case. However, synthesizing the reduced order model, controller, and observer that ensure $S_{ss}$ is Schur stable is not straightforward and is a planned area of future work. 

To address the cases when the closed-loop system does not reach a steady state, we now present a more general analysis of the setpoint tracking performance.

\subsection{Tracking Variable Convergence} \label{setpointconverge}
Previously we introduced conditions that guarantee offset-free setpoint tracking under nominal (disturbance free) scenarios. We now use these results to discuss convergence of the tracking variables when the closed-loop system does not reach steady state (e.g., due to \textit{bounded} disturbances). Specifically, we can show that under the proposed control scheme the tracking variables converge to a set containing the desired setpoint, $\{r\} \oplus \mathcal{R}$. This is accomplished by noting that as the ROMPC scheme converges the control law (\ref{controller}) will compensate for the nominal model reduction error. 

To characterize the set $\mathcal{R}$, first, let $\epsilon_x, \epsilon_u > 0$ be user-defined convergence thresholds for the ROMPC problem. From Section~\ref{sec:stability}, since $(\delta \bar{x},\delta \bar{u}) = (0,0)$ is exponentially stable, the conditions $||\delta \bar{x}||_\infty \leq \epsilon_x$ and $||\delta \bar{u}||_\infty \leq \epsilon_u$ are attainable in finite time. 
Now consider the linear program:
\begin{equation} \label{computeR}
\begin{split}
&\underset{x^f_j, \hat{x}_j, u_i, v_i, w^f_i, r, (x^f, u, \hat{x}, \bar{x}, \bar{u})_{\infty}}{\text{maximize}} \:\: \theta^{\mathcal{R}^T}_l(TH^fx^f_k - r) \\
&\text{subject to} \:\: \\
&x^f_{i+1} = A^fx^f_i + B^fu_i + w^f_i, \\
&\hat{x}_{i+1} = (A-LC)\hat{x}_i + Bu_i + LC^fx^f_i + Lv_i, \\
&u_i = \bar{u}_i + K(\hat{x}_i - \bar{x}_i), \\
&||\bar{u}_i - \bar{u}_{\infty}||_\infty \leq \epsilon_u, \:\: ||\bar{x}_i - \bar{x}_{\infty}||_\infty \leq \epsilon_x, \\
&\hat{x}_j - \bar{x}_{\infty} \in \mathcal{D}, \:\: H^fx^f_j \in \mathcal{Z}, \:\:
v_i \in \mathcal{V}, \:\:
w^f_i \in \mathcal{W},\\
&\begin{bmatrix}
x^f_{\infty} \\ u_{\infty}
\end{bmatrix} = 
S_f^{-1}\begin{bmatrix}
0 \\ r
\end{bmatrix}, \\
&\hat{x}_{\infty} = (I-(A-LC))^{-1}\Big(Bu_{\infty} + LC^fx^f_{\infty}\Big), \\
&\begin{bmatrix}
\bar{x}_{\infty} \\ \bar{u}_{\infty}
\end{bmatrix} = 
S_c^{-1}\begin{bmatrix}
0 \\ K\hat{x}_{\infty} - u_{\infty}
\end{bmatrix}, \\
&H^fx^f_{\infty} \in \mathcal{Z}, \:\: u_{\infty} \in \mathcal{U}, \:\: H\bar{x}_{\infty} \in \bar{\mathcal{Z}}, \:\: \bar{u}_{\infty} \in \bar{\mathcal{U}},\\
\end{split}
\end{equation}
where $i = k-\tau_{ss},\dots,k-1$, $j = k-\tau_{ss},\dots,k$ and the decision variables are $x^f_j$, $\hat{x}_j$, $u_i$, $v_i$, $w^f_i$, $r$, $\bar{x}_i$, $\bar{u}_i$, and the steady-state variables $x^f_{\infty}$, $u_{\infty}$, $\hat{x}_{\infty}$, $\bar{x}_{\infty}$, $\bar{u}_{\infty}$. The parameters of the problem include the sets $\mathcal{D}$, $\bar{\mathcal{Z}}$, and $\bar{\mathcal{U}}$ (computed in Section \ref{sec:errorbounds}), the time horizon, $\tau_{ss}$ (which can be different from $\tau$, used in Section \ref{sec:errorbounds}), and the thresholds $\epsilon_x$, $\epsilon_u$. The vectors $\theta^{\mathcal{R}}_l$ define the direction normal to the $l^{\text{th}}$ face of the bounding polytope. The set $\mathcal{R}$ is then given by the polytope $\mathcal{R}(\tau_{ss}, \mathcal{D},\bar{\mathcal{Z}},\bar{\mathcal{U}},\epsilon_x,\epsilon_u)$ defined as:
\begin{equation} \label{setR}
\mathcal{R}(\tau_{ss}, \mathcal{D},\bar{\mathcal{Z}},\bar{\mathcal{U}},\epsilon_x,\epsilon_u) := \{e_r \: | \: \Theta^{\mathcal{R}} e_r \leq \gamma^{\mathcal{R}} \},
\end{equation}
where $e_r = z^r - r$ and the $l^{\text{th}}$ row of $\Theta^{\mathcal{R}}$ is given by $\theta^{\mathcal{R}^T}_l$ and $\gamma^{\mathcal{R}}_l$ is the optimal value of the linear program (\ref{computeR}). Note that the desired setpoint $r$ is a decision variable in this problem, which makes the computed set $\mathcal{R}$ valid for \textit{all} feasible setpoints. 

By employing the results on robust constraint satisfaction from Section \ref{sec:errorbounds} and the results on ROMPC stability from Section \ref{sec:stability}, we can now present our main result.
\begin{thm}[Setpoint Tracking Under Bounded Disturbances] \label{Rtheorem}
Let the ROMPC problem be given by (\ref{optimalcontrol}), and the conditions from Theorem \ref{Robusttheorem} hold. Let the set $\mathcal{R} = \mathcal{R}(\tau_{ss},\mathcal{D},\bar{\mathcal{Z}},\bar{\mathcal{U}},\epsilon_x,\epsilon_u)$ be defined by solving (\ref{computeR}) for a given $\tau_{ss}$, $\mathcal{D}$, $\bar{\mathcal{Z}}$, $\bar{\mathcal{U}}$, $\epsilon_x$, $\epsilon_u$.
If the tracking targets $x^f_{\infty}$, $u_{\infty}$ are solutions to (\ref{steadystate}) and the ROMPC targets $\bar{x}_{\infty}$, $\bar{u}_{\infty}$ are solutions to (\ref{nomsteadystate}) and satisfy the constraints $H\bar{x}_{\infty} \in \bar{\mathcal{Z}}$, $\bar{u}_{\infty} \in \bar{\mathcal{U}}$, then there exists a finite time $k_r$ such that the tracking variable $z^r_k$ will lie within the set $\{r\} \oplus \mathcal{R}$ for all time $k \geq k_r + \tau_{ss}$.
\end{thm}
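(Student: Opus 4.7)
The overall plan is to show that, once the ROMPC has settled to within its convergence thresholds, the actual closed-loop trajectory over the last $\tau_{ss}$ time steps is a feasible assignment for the decision variables of the linear program (\ref{computeR}). Since that LP maximizes $\theta^{\mathcal{R}^T}_l(TH^f x^f_k - r)$ over all such feasible windows, the true tracking error $e_{r,k} := z^r_k - r = TH^f x^f_k - r$ must then satisfy $\theta^{\mathcal{R}^T}_l e_{r,k} \leq \gamma^{\mathcal{R}}_l$ for every face index $l$, giving $e_{r,k} \in \mathcal{R}$ as desired.

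First I would invoke the exponential stability of the ROMPC scheme established in Section \ref{sec:stability}: because $(\delta\bar{x},\delta\bar{u})=(0,0)$ is exponentially stable, there is a finite $k_r$ such that $\|\bar{x}_k - \bar{x}_\infty\|_\infty \leq \epsilon_x$ and $\|\bar{u}_k - \bar{u}_\infty\|_\infty \leq \epsilon_u$ for all $k \geq k_r$. I would then fix an arbitrary $k \geq k_r + \tau_{ss}$, so that these convergence bounds are active over the entire window $[k-\tau_{ss},\,k]$, and simultaneously appeal to Theorem \ref{Robusttheorem} to guarantee, over this window, that the full-order state, control, and noise values satisfy (\ref{const})--(\ref{noise}) and that the control error $d_i = \hat{x}_i - \bar{x}_i$ lies in $\mathcal{D}$.

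Next I would construct a feasible LP point by plugging the true closed-loop values over this window into the corresponding LP decision variables: the actual full-order states, observer estimates, applied controls, and noise realizations; the actual nominal ROMPC states and inputs $\bar{x}_i,\bar{u}_i$; the full-order target $(x^f_\infty,u_\infty)$ obtained from (\ref{steadystate}); the steady-state observer value $\hat{x}_\infty$ from (\ref{eq:estSS}); the nominal targets $(\bar{x}_\infty,\bar{u}_\infty)$ from (\ref{nomsteadystate}); and the setpoint $r$ itself. The three dynamics equalities in the LP hold by construction of the closed loop; the three steady-state equalities hold by construction of the targets; the inclusions on $x^f_j$, $v_i$, $w^f_i$ and on the control error come directly from Theorem \ref{Robusttheorem}; the threshold inclusions on $\bar{x}_i - \bar{x}_\infty$ and $\bar{u}_i - \bar{u}_\infty$ hold because $k - \tau_{ss} \geq k_r$; and the target-admissibility inclusions $H\bar{x}_\infty \in \bar{\mathcal{Z}}$, $\bar{u}_\infty \in \bar{\mathcal{U}}$ are assumed in the theorem statement. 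The LP's objective value at this feasible point is precisely $\theta_l^{\mathcal{R}^T}(TH^f x^f_k - r)$, so this quantity is bounded above by the LP optimum $\gamma^{\mathcal{R}}_l$, which concludes the argument.

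The main obstacle is the feasibility check for the LP constraint $\hat{x}_j - \bar{x}_\infty \in \mathcal{D}$: Theorem \ref{Robusttheorem} only supplies $\hat{x}_j - \bar{x}_j \in \mathcal{D}$, so one must dispose of the residual $\bar{x}_j - \bar{x}_\infty$, which is of infinity norm at most $\epsilon_x$. This is either absorbed by interpreting the $\mathcal{D}$ appearing in (\ref{computeR}) as a slightly enlarged set that tolerates the $\epsilon_x$-sized perturbation, or argued away by noting that once $\bar{x}_j$ lies within $\epsilon_x$ of $\bar{x}_\infty$ the inclusion $\hat{x}_j - \bar{x}_\infty \in \mathcal{D}$ can be guaranteed provided $\epsilon_x$ was chosen small enough relative to the interior of $\mathcal{D}$. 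Once this technical point is settled, LP feasibility combined with LP maximality directly yields $z^r_k \in \{r\} \oplus \mathcal{R}$ for all $k \geq k_r + \tau_{ss}$.
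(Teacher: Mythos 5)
Your proposal is correct and follows essentially the same route as the paper's own proof: exhibit the true closed-loop window as a feasible point of the linear program (\ref{computeR}) — with exponential ROMPC convergence supplying the $\epsilon_x$, $\epsilon_u$ threshold conditions after a finite time $k_r$ and Theorem \ref{Robusttheorem} supplying the inclusions on $H^f x^f_j$, the disturbances, and the control error — and then invoke LP maximality to conclude $z^r_k \in \{r\} \oplus \mathcal{R}$ for $k \geq k_r + \tau_{ss}$. The ``main obstacle'' you flag, namely that Theorem \ref{Robusttheorem} gives $\hat{x}_j - \bar{x}_j \in \mathcal{D}$ while the LP constrains $\hat{x}_j - \bar{x}_{\infty} \in \mathcal{D}$ (a discrepancy of size at most $\epsilon_x$), is a genuine subtlety that the paper's proof passes over by simply asserting condition (ii) follows from Theorem \ref{Robusttheorem}, so your explicit acknowledgment and proposed remedies (enlarging $\mathcal{D}$ by an $\epsilon_x$-ball or restricting $\epsilon_x$) make your argument, if anything, more careful than the published one.
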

\begin{proof}
From the linear program (\ref{computeR}), the inclusion $z^r_k \in \{r\} \oplus \mathcal{R}$ is valid if the following conditions are satisfied: (i) $||\delta \bar{x}_i||_\infty \leq \epsilon_x$ and $||\delta \bar{u}_i||_\infty \leq \epsilon_u$ for $i \in [k-\tau_{ss},\dots,k-1]$, (ii) $\hat{x}_j - \bar{x}_{\infty} \in \mathcal{D}$, and (iii) $H^fx^f_i \in \mathcal{Z}$ for $j \in [k-\tau_{ss},\dots,k]$. By the exponential convergence of ROMPC (illustrated in Section~\ref{sec:stability}), there exists a finite time $k_r$ such that condition (i) is satisfied for all $k \geq k_r + \tau_{ss}$. Theorem \ref{Robusttheorem} guarantees that condition (ii) is satisfied for all $k \geq k_r + \tau_{ss}$ when $k_r \geq k_d$. Finally, condition (iii) is also satisfied by Theorem \ref{Robusttheorem} for all $k \geq k_d$.
\end{proof}

\begin{crl}[Offset-free Setpoint Tracking]
If no disturbances act on the system and the matrix $S_{ss}$ is Schur stable, then the tracking variables will converge to the setpoint with zero offset (i.e., $z^r_k \rightarrow r$ as $k\rightarrow \infty$).
\end{crl}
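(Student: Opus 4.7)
The plan is to exploit the exponential stability of the ROMPC scheme from Section \ref{sec:stability} together with the Schur stability of $S_{ss}$, viewing the disturbance-free closed loop as a stable linear system driven by a vanishing input. I would begin by verifying that $(x^f_\infty,\hat{x}_\infty,\bar{x}_\infty,\bar{u}_\infty,u_\infty)$ is an equilibrium of the noise-free interconnection of the plant (\ref{fom}), the observer (\ref{estimator}), and the controller (\ref{controller}): the first row of (\ref{steadystate}) gives $x^f_\infty = A^f x^f_\infty + B^f u_\infty$; (\ref{eq:estSS}) rearranges to $\hat{x}_\infty = (A-LC)\hat{x}_\infty + Bu_\infty + LC^f x^f_\infty$; and (\ref{nomsteadystate}) simultaneously yields $\bar{x}_\infty = A\bar{x}_\infty + B\bar{u}_\infty$ and $u_\infty = \bar{u}_\infty + K(\hat{x}_\infty - \bar{x}_\infty)$, so the control law (\ref{controller}) reproduces $u_\infty$ at steady state. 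The second row of (\ref{steadystate}) also ensures $TH^f x^f_\infty = r$.

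Next, introducing the error variables $\delta x^f_k = x^f_k - x^f_\infty$, $\delta \hat{x}_k = \hat{x}_k - \hat{x}_\infty$, $\delta \bar{x}_k = \bar{x}_k - \bar{x}_\infty$, $\delta \bar{u}_k = \bar{u}_k - \bar{u}_\infty$ and substituting (\ref{controller}) into (\ref{fom})--(\ref{estimator}) with $w^f_k = v_k = 0$ produces the stacked error recursion
\begin{equation*}
\begin{bmatrix}\delta x^f_{k+1}\\ \delta \hat{x}_{k+1}\end{bmatrix} = S_{ss}\begin{bmatrix}\delta x^f_k\\ \delta \hat{x}_k\end{bmatrix} + \begin{bmatrix}B^f\\ B\end{bmatrix}\bigl(\delta \bar{u}_k - K\delta \bar{x}_k\bigr),
\end{equation*}
which is precisely the dynamics already derived in Section \ref{trackingcond} but with the exogenous forcing from $(\delta \bar{x}_k,\delta \bar{u}_k)$ retained rather than set to zero.

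Finally, Section \ref{sec:stability} guarantees that $\delta \bar{x}_k$ and $\delta \bar{u}_k$ decay exponentially to zero, so the input to the stacked system vanishes. Because $S_{ss}$ is Schur stable by hypothesis, a standard linear input-to-state-stability argument (or a direct convolution-sum bound against the geometrically decaying state-transition matrix of $S_{ss}$) delivers $\delta x^f_k \to 0$ and $\delta \hat{x}_k \to 0$ as $k \to \infty$. Hence $z^r_k = TH^f x^f_\infty + TH^f \delta x^f_k \to r$, which is the claim. The only mildly delicate step is the equilibrium verification, which requires combining (\ref{steadystate}), (\ref{eq:estSS}), and (\ref{nomsteadystate}) consistently so that all constant drifts cancel; once that is in place, the remainder is a textbook stability argument for a stable linear system with a vanishing exogenous input.
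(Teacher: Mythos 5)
Your proposal is correct and follows essentially the same route the paper sketches in Section \ref{trackingcond}: verify that \eqref{steadystate}, \eqref{eq:estSS}, and \eqref{nomsteadystate} define an equilibrium of the disturbance-free closed loop, write the $(\delta x^f,\delta\hat{x})$ error dynamics governed by $S_{ss}$ with the $(\delta\bar{x},\delta\bar{u})$ contribution as a vanishing forcing term, and invoke Schur stability plus the exponential convergence of the ROMPC scheme. The main difference is that you make explicit (via the retained forcing term and the convolution/ISS bound) what the paper only asserts parenthetically about the asymptotically decaying contribution of $\delta\bar{x},\delta\bar{u}$, which is a welcome tightening but not a different argument.
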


\section{Numerical Experiments} \label{sec:examples}
Next we discuss two example applications of our approach. The first is a synthetic system taken from the reduced order modeling literature, while the second addresses a more complex task of controlling a flexible beam.

\subsection{Synthetic System} \label{sec:numericalex}
We first consider a numerical example adapted from \cite{HovlandLovaasEtAl2008}. The full order system has dimension $n^f=6$ and is given by the matrices $A^f$, $B^f$ in \cite{HovlandLovaasEtAl2008}, and
\begin{equation*}
C^f = \begin{bmatrix}
1.29 & 0.24 & 0_{1\times 4}
\end{bmatrix}, \quad 
H^f = \begin{bmatrix}
I_{2\times 2} & 0_{2\times 4}
\end{bmatrix}.
\end{equation*}

The performance and control constraints are $\mathcal{Z} = \{z \: | \:\:\: ||z||_{\infty} \leq 10 \}$ and $\mathcal{U} = \{u \: | \:\:\: ||u||_{\infty} \leq 
2 \}$. The process and measurement noise are bounded by $\mathcal{W} = \{w^f \: | \:\:\: ||w^f||_{\infty} \leq 0.05 \}$ and $\mathcal{V} = \{v \: | \:\:\: ||v^f||_{\infty} \leq 0.05 \}$. The reduced order model has dimension $n=2$ and was computed using balanced truncation.

In this example, both the proposed approach and a na\"ive approach  are implemented to demonstrate that setpoint tracking is non-trivial when using reduced order models for control. The na\"ive approach is to simply choose the desired reduced order target states $\bar{x}_{\infty}$, $\bar{u}_{\infty}$ as the values that would drive the \textit{reduced order system} (\ref{rom}) to the setpoint. These values are computed by solving the linear system
\begin{equation}
\begin{bmatrix}
A - I & B \\ TH & 0
\end{bmatrix}\begin{bmatrix}
\bar{x}_{\infty} \\ \bar{u}_{\infty}
\end{bmatrix} = 
\begin{bmatrix}
0 \\ r
\end{bmatrix}.
\end{equation}

Figure \ref{numericalZ} shows the results of using the ROMPC scheme presented in this paper, where the setpoint is shown in black. In this first plot, the proposed method is compared against the na\"ive method for a case with \textit{no disturbances}. As expected, the proposed method converges to offset-free tracking but the na\"ive method does not. 

In the remaining plots only the proposed ROMPC method is demonstrated, but with different (bounded) disturbance sequences. Additionally, the set $\{r\} \oplus \mathcal{R}$ is plotted, starting at the corresponding time $k_r + \tau_{ss}$ as per Theorem \ref{Rtheorem}. In blue we show a simulation where the disturbances are drawn from a (zero-mean) uniform distribution. The simulation in green features (constant) disturbances that lie on the boundary of the bounding polytopes $\mathcal{W}$, $\mathcal{V}$. Finally, in red, we show a case where the disturbances follow one of the worst-case feasible sequences determined when the set $\mathcal{R}$ was computed. As expected, we see that in each case the tracking variable converges to the set $\{r\} \oplus \mathcal{R}$.

\begin{figure*}[ht]
    \centering
    \begin{subfigure}[t]{.26\textwidth}
        \includegraphics[width=\textwidth]{./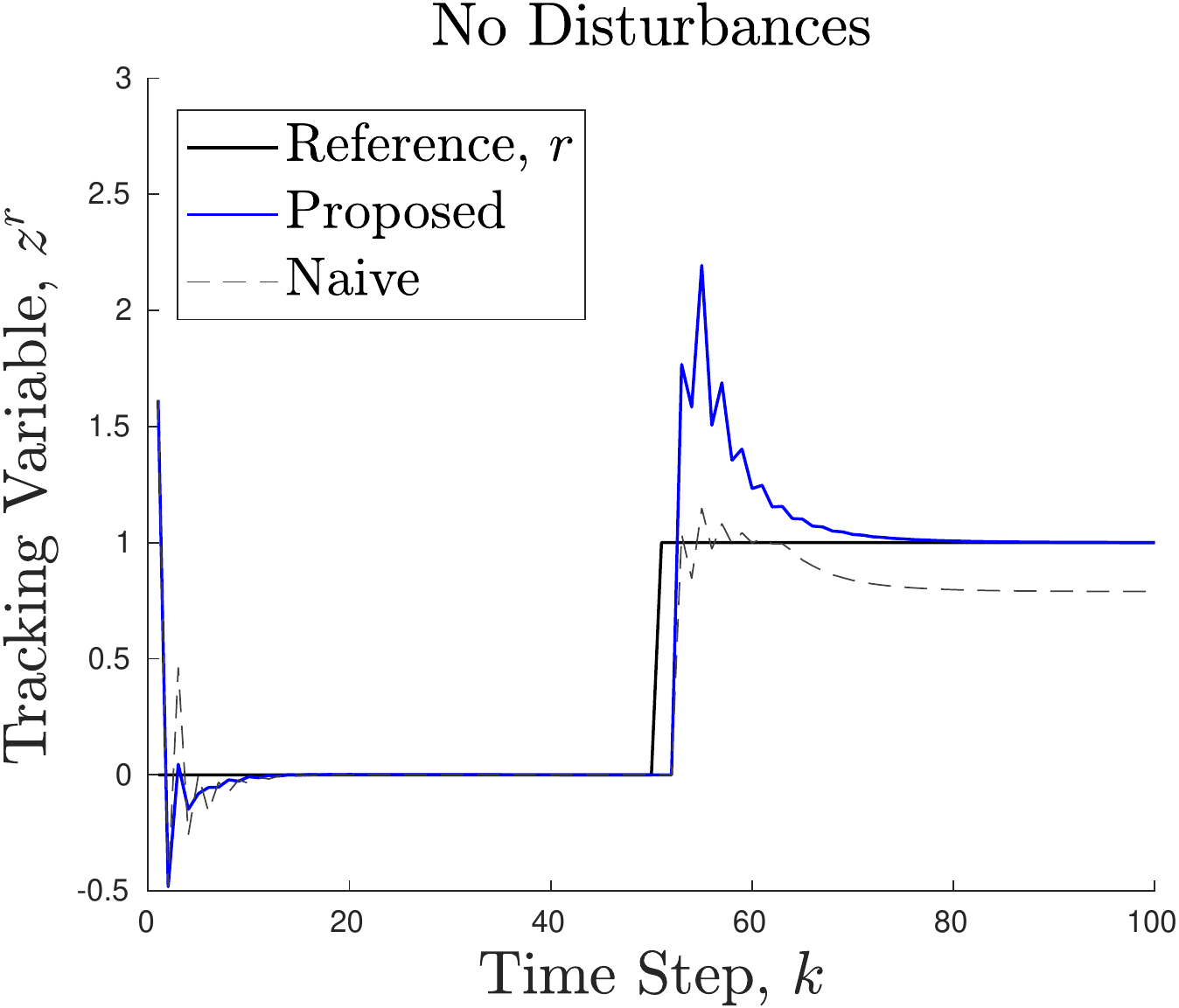}
    \end{subfigure}
    \begin{subfigure}[t]{.24\textwidth}
        \includegraphics[width=\textwidth]{./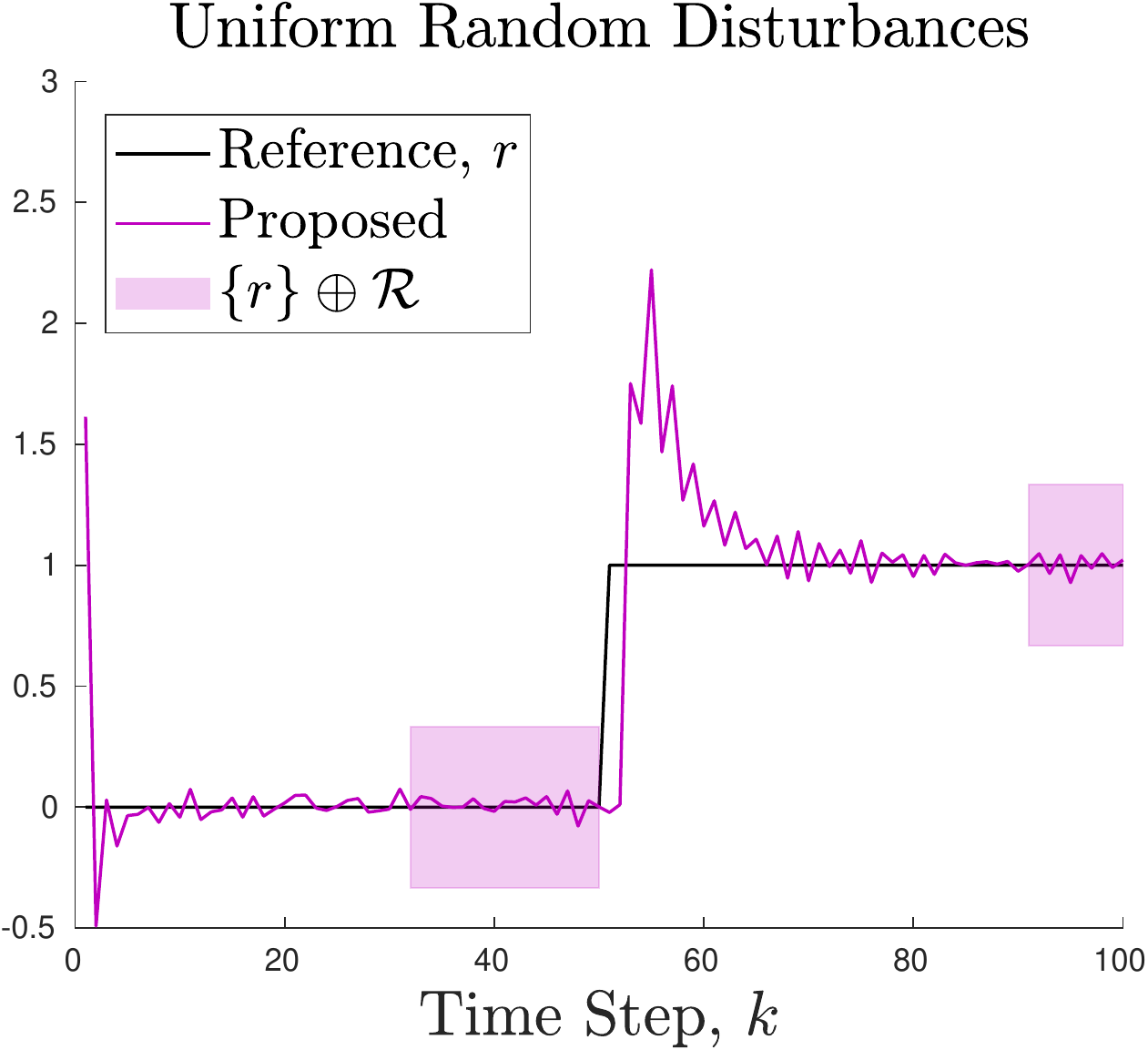}
    \end{subfigure}
    \begin{subfigure}[t]{.24\textwidth}
        \includegraphics[width=\textwidth]{./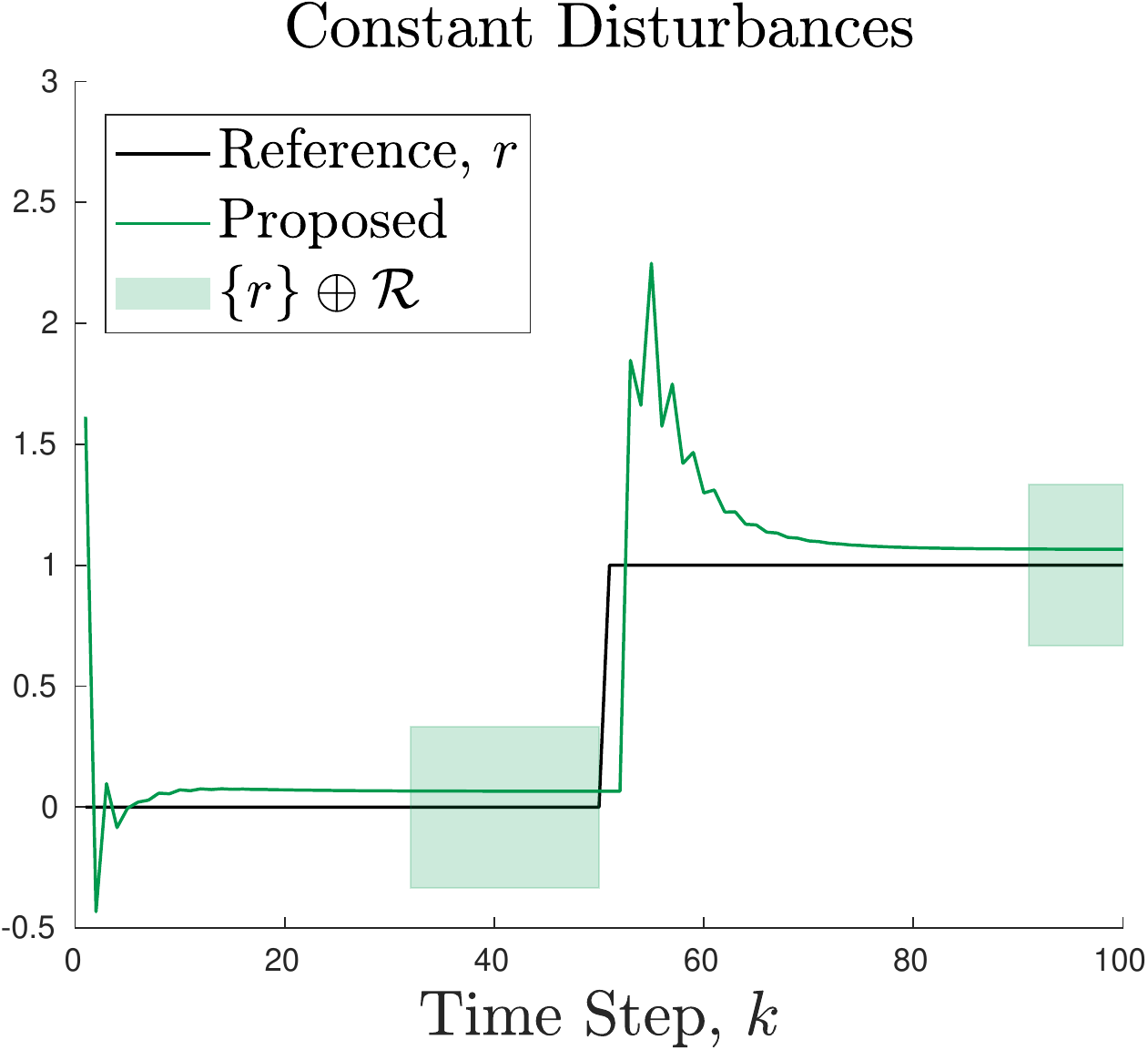}
    \end{subfigure}
    \begin{subfigure}[t]{.24\textwidth}
        \includegraphics[width=\textwidth]{./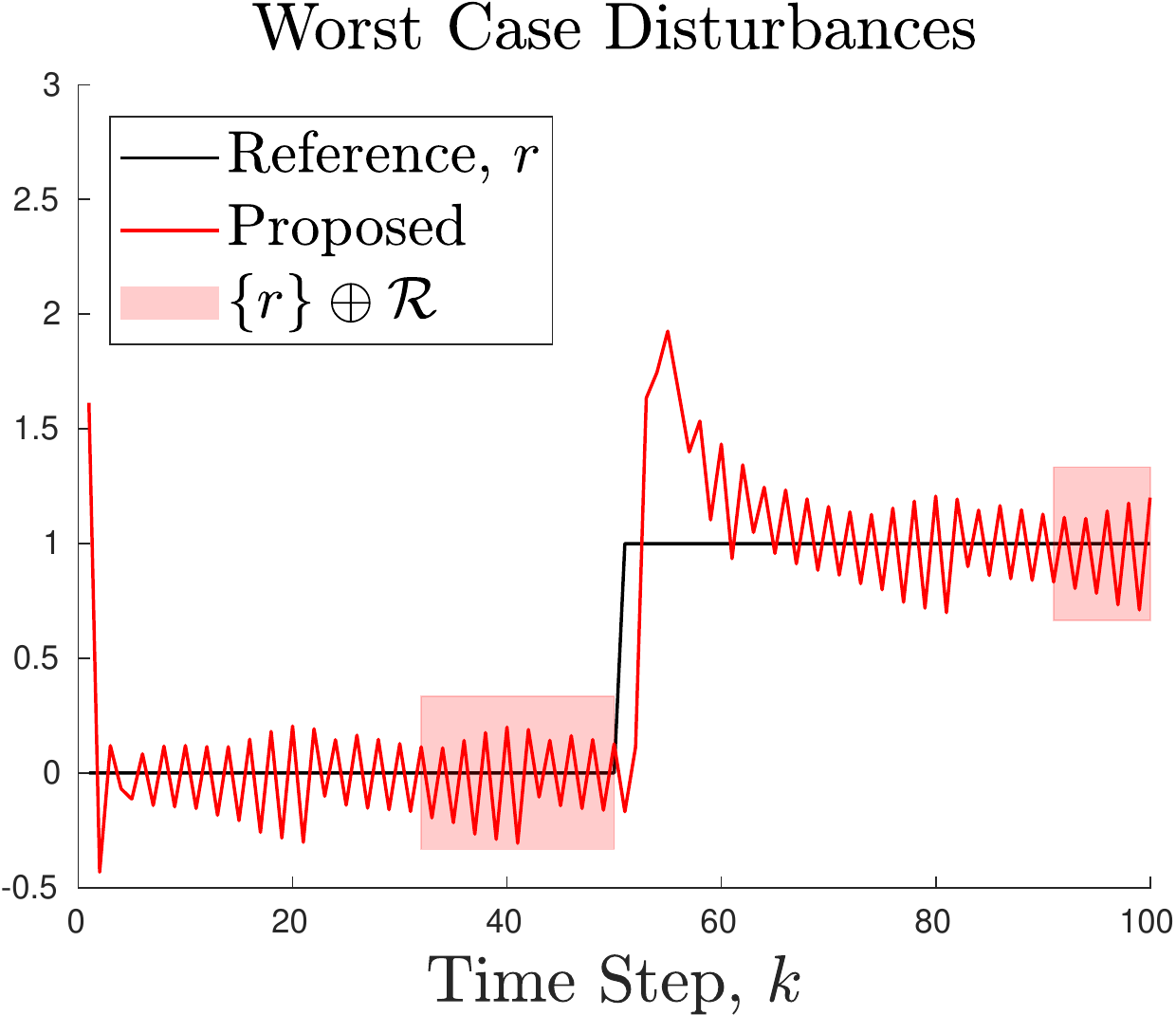}
    \end{subfigure}
    \caption{Setpoint tracking performance demonstrated on a synthetic system discussed in Section \ref{sec:numericalex}, under varying types of bounded disturbances.}
\label{numericalZ}
\vspace{-.20in}
\end{figure*}

\subsection{Flexible Beam Control} \label{sec:flexArm}
We now implement the proposed ROMPC scheme to track a vertical position setpoint with the endpoint of a flexible beam. The flexible beam is modeled by finite elements, as discussed in \cite{JunkinsKim1993, MartinsMohamedEtAl2003}. The beam model includes four nodes and is assumed to be attached to a rigid hub. A torque input to the hub controls the angle of the hub. There is no damping of the motion of the hub, but the flexible modes of the beam are damped. Additionally, a uniform load is applied across the length of the beam. The model's physical parameters represent the Sheffield flexible manipulator described in \cite{MartinsMohamedEtAl2003}.

The resulting model has $n^f = 18$ states, including $2$ integrator modes. The reduced order model is computed using balanced truncation and has dimension $n = 8$. The proposed ROMPC scheme is then applied to track a setpoint with the endpoint of the beam. The results are shown in Figure \ref{flexArmZ}. We see that the proposed method successfully tracks the setpoint under disturbances drawn from a (zero-mean) uniform distribution. However, using the na\"ive approach induces a non-negligible tracking error.
\begin{figure}[ht]
  \centering
  \includegraphics[width=0.33\textwidth]{./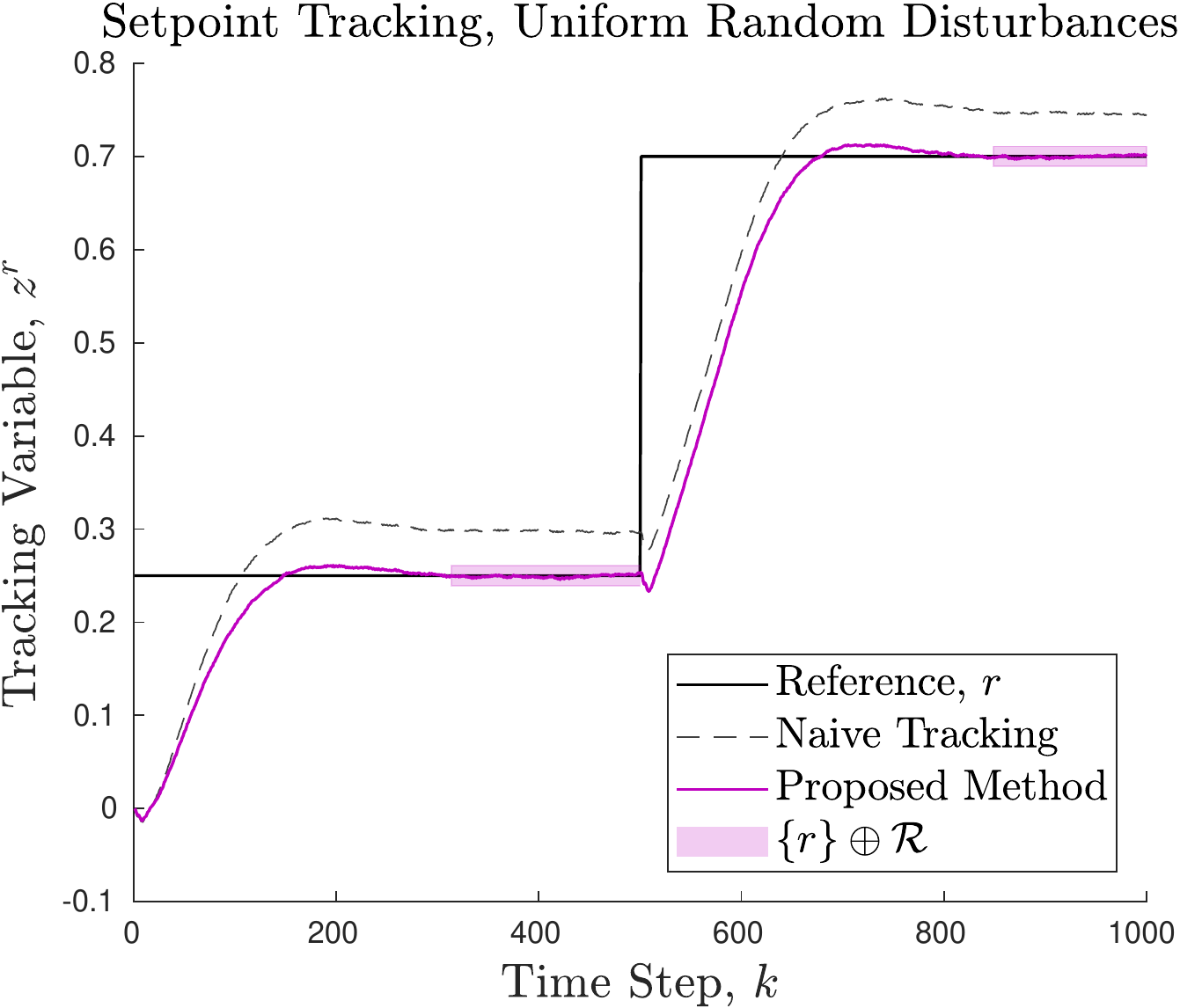}
  \caption{Simulation of the flexible beam tracking problem discussed in Section \ref{sec:flexArm}.}
  \label{flexArmZ}
  \vspace{-.20in}
\end{figure}

\section{Conclusion}\label{sec:conclusion}

In this paper a \textit{reduced order} model predictive control (ROMPC) method is proposed that enables setpoint tracking while robustly guaranteeing constraint satisfaction. Setpoint tracking is accomplished through the design of the ROMPC cost function, and constraint satisfaction is guaranteed by tightening the true constraints using error bounds computed offline. Additionally we provided a design methodology to ensure overall stability and convergence for the algorithm. Finally, the method was validated on a synthetic example and an example inspired by flexible structure control. 

{\em Future Work:} An interesting extension to this work is to reformulate the ROMPC objective function to handle cases where the setpoint cannot be feasibly tracked, as in \cite{AlvaradoLimonEtAl2007}. Further, we have noted that for some designs of the reduced order models, it may not be possible to find non-empty tightened constraint sets $\bar{\mathcal{Z}}$, $\bar{\mathcal{U}}$. Therefore characterizing the properties of full order and reduced order models that lead to non-empty $\bar{\mathcal{Z}}$, $\bar{\mathcal{U}}$ is crucial. This would also enable the reduced order model and the control scheme to be co-designed in a better way. Finally, we look forward to applying our approach to the control of real world robotic systems, including those with nonlinear dynamics.

\bibliographystyle{IEEEtran}
\bibliography{./main,./ASL_papers}

\newcommand{\noopsort}[1]{} \newcommand{\printfirst}[2]{#1}
  \newcommand{\singleletter}[1]{#1} \newcommand{\switchargs}[2]{#2#1}
\begin{thebibliography}{10}
\providecommand{\url}[1]{#1}
\csname url@samestyle\endcsname
\providecommand{\newblock}{\relax}
\providecommand{\bibinfo}[2]{#2}
\providecommand{\BIBentrySTDinterwordspacing}{\spaceskip=0pt\relax}
\providecommand{\BIBentryALTinterwordstretchfactor}{4}
\providecommand{\BIBentryALTinterwordspacing}{\spaceskip=\fontdimen2\font plus
\BIBentryALTinterwordstretchfactor\fontdimen3\font minus
  \fontdimen4\font\relax}
\providecommand{\BIBforeignlanguage}[2]{{%
\expandafter\ifx\csname l@#1\endcsname\relax
\typeout{** WARNING: IEEEtran.bst: No hyphenation pattern has been}%
\typeout{** loaded for the language `#1'. Using the pattern for}%
\typeout{** the default language instead.}%
\else
\language=\csname l@#1\endcsname
\fi
#2}}
\providecommand{\BIBdecl}{\relax}
\BIBdecl

\bibitem{ErenPrachEtAl2017}
U.~Eren, A.~Prach, B.~B. Ko{\c{c}}er, S.~V. Rakovi{\'c}, E.~Kayacan, and
  B.~A{\c{c}}ikmese, ``Model predictive control in aerospace systems: Current
  state and opportunities,'' \emph{{AIAA Journal of Guidance, Control, and
  Dynamics}}, vol.~40, no.~7, pp. 1541--1566, 2017.

\bibitem{PoignetGautier2000}
P.~Poignet and M.~Gautier, ``Nonlinear model predictive control of a robot
  manipulator,'' in \emph{{Int.\ Workshop on Advanced Motion Control}}, 2000.

\bibitem{BealGerdes2013}
C.~E. Beal and J.~C. Gerdes, ``Model predictive control for vehicle
  stabilization at the limits of handling,'' \emph{{IEEE Transactions on
  Control Systems Technology}}, vol.~21, no.~4, pp. 1258--1269, 2013.

\bibitem{GouryDuriez2018}
O.~Goury and C.~Duriez, ``Fast, generic and reliable control and simulation of
  soft robots using model order reduction,'' \emph{{IEEE Transactions on
  Robotics}}, 2018.

\bibitem{RaoPanEtAl1990}
S.~Rao, T.~Pan, and V.~Venkayya, ``Modeling, control, and design of flexible
  structures: A survey,'' \emph{{Applied Mechanics Reviews}}, vol.~43, no.~5,
  pp. 99--117, 1990.

\bibitem{AmsallemDeolalikarEtAl2013}
D.~Amsallem, S.~Deolalikar, F.~Gurrola, and C.~Farhat, ``Model predictive
  control under coupled fluid-structure constraints using a database of
  reduced-order models on a tablet,'' in \emph{{AIAA Aviation Technology,
  Integration, and Operations (ATIO) Conference}}, 2013.

\bibitem{ArdakaniBridges2011}
H.~A. Ardakani and T.~J. Bridges, ``Shallow-water sloshing in vessels
  undergoing prescribed rigid-body motion in three dimensions,'' \emph{{Journal
  of Fluid Mechanics}}, vol. 667, pp. 474--519, 2011.

\bibitem{Antoulas2005}
A.~Antoulas, \emph{Approximation of Large-Scale Dynamical Systems}.\hskip 1em
  plus 0.5em minus 0.4em\relax {SIAM}, 2005.

\bibitem{RawlingsMayne2017}
J.~B. Rawlings and D.~Q. Mayne, \emph{Model Predictive Control: Theory,
  Computation, and Design}.\hskip 1em plus 0.5em minus 0.4em\relax {Nob Hill
  Publishing}, 2017.

\bibitem{MaederBorrelliEtAl2009}
U.~Maeder, F.~Borrelli, and M.~Morari, ``Linear offset-free model predictive
  control,'' \emph{{Automatica}}, vol.~45, no.~10, pp. 2214--2222, 2009.

\bibitem{AlvaradoLimonEtAl2007}
I.~Alvarado, D.~Limon, T.~Alamo, and E.~F. Camacho, ``Output feedback robust
  tube based {MPC} for tracking of piece-wise constant references,'' in
  \emph{{Proc.\ IEEE Conf.\ on Decision and Control}}, 2007.

\bibitem{AstridHuismanEtAl2002}
P.~Astrid, L.~Huisman, S.~Weiland, and A.~C. P.~M. Backx, ``Reduction and
  predictive control design for a computational fluid dynamics model,'' in
  \emph{{Proc.\ IEEE Conf.\ on Decision and Control}}, 2002.

\bibitem{HovlandWillcoxEtAl2006}
S.~Hovland, K.~Willcox, and J.~T. Gravdahl, ``{MPC} for large-scale systems via
  model reduction and multiparametric quadratic programming,'' in \emph{{Proc.\
  IEEE Conf.\ on Decision and Control}}, 2006.

\bibitem{HovlandGravdahlEtAl2008}
S.~Hovland, J.~T. Gravdahl, and K.~E. Willcox, ``Explicit model predictive
  control for large-scale systems via model reduction,'' \emph{{AIAA Journal of
  Guidance, Control, and Dynamics}}, vol.~31, no.~4, pp. 918--926, 2008.

\bibitem{HovlandLovaasEtAl2008}
S.~Hovland, C.~Lovaas, J.~T. Gravdahl, and G.~C. Goodwin, ``Stability of model
  predictive control based on reduced-order models,'' in \emph{{Proc.\ IEEE
  Conf.\ on Decision and Control}}, 2008.

\bibitem{SopasakisBernardiniEtAl2013}
P.~Sopasakis, D.~Bernardini, and A.~Bemporad, ``Constrained model predictive
  control based on reduced-order models,'' in \emph{{Proc.\ IEEE Conf.\ on
  Decision and Control}}, 2013.

\bibitem{LoehningRebleEtAl2014}
M.~L{\"o}hning, M.~Reble, J.~Hasenauer, S.~Yu, and F.~Allg{\"o}wer, ``Model
  predictive control using reduced order models: Guaranteed stability for
  constrained linear systems,'' \emph{{Journal of Process Control}}, vol.~24,
  no.~11, pp. 1647--1659, 2014.

\bibitem{KoegelFindeisen2015b}
M.~K{\"o}gel and R.~Findeisen, ``Robust output feedback model predictive
  control using reduced order models,'' \emph{{IFAC-Papers Online}}, vol.~48,
  no.~8, pp. 1008--1014, 2015.

\bibitem{MayneRakovicEtAl2006}
D.~Q. Mayne, S.~V. Rakovi{\'c}, R.~Findeisen, and F.~Allg{\"o}wer, ``Robust
  output feedback model predictive control of constrained linear systems,''
  \emph{{Automatica}}, vol.~42, no.~7, pp. 1217--1222, 2006.

\bibitem{RakovicKerriganEtAl2005}
S.~V. Rakovic, E.~C. Kerrigan, K.~I. Kouramas, and D.~Q. Mayne, ``Invariant
  approximations of the minimal robust positively invariant set,'' \emph{{IEEE
  Transactions on Automatic Control}}, vol.~50, no.~3, pp. 406--410, 2005.

\bibitem{JunkinsKim1993}
J.~Junkins and Y.~Kim, \emph{Introduction to Dynamics and Control of Flexible
  Structures}.\hskip 1em plus 0.5em minus 0.4em\relax {American Institute of
  Aeronautics and Astronautics}, 1993, ch. Mathematical Models of Flexible
  Structures, pp. 139--234.

\bibitem{MartinsMohamedEtAl2003}
J.~M. Martins, Z.~Mohamed, M.~O. Tokhi, J.~S{\'a}~da Costa, and M.~A. Botto,
  ``Approaches for dynamic modelling of flexible manipulator systems,''
  \emph{{IEE Proceedings - Control Theory and Applications}}, vol. 150, no.~4,
  pp. 401--411, 2003.

\end{thebibliography}

\end{document}